\newtheorem{theorem}{Theorem}
\theoremstyle{definition}
\newtheorem{definition}{Definition}
\newtheorem{lemma}[theorem]{Lemma}
\newtheorem{claim}[theorem]{Claim}
\newtheorem{fact}[theorem]{Fact}
\newtheorem*{theorem*}{\bf Informal Theorem}
\newcommand{\etal}{{\em et al}.~}
\newcommand{\calF}{\mathscr{F}}
\newcommand{\cA}{\mathcal{A}}
\newcommand{\cB}{\mathcal{B}}
\newcommand{\cF}{\mathscr{F}}
\newcommand{\cI}{\mathscr{I}}
\newcommand{\cJ}{\mathscr{J}}
\newcommand{\cM}{\mathcal{M}}
\newcommand{\cP}{\mathscr{P}}
\newcommand{\cQ}{\mathcal{Q}}
\newcommand{\R}{\mathbb R}
\newcommand{\eps}{\varepsilon}
\newcommand{\poly}{\mathrm{poly}}
\newcommand{\cov}{\mathsf{cov}}
\newcommand{\opt}{{\mathsf{opt}}}
\newcommand{\hopt}{{\widehat{\opt}}}
\newcommand{\sfm}{m}
\newcommand{\sfv}{\mathsf{V}}
\newcommand{\val}{{\mathsf{val}}}
\newcommand{\chld}{{\mathsf{Chld}}}
\newcommand{\rfc}{{Robust $\mathscr{F}$-Supplier}}
\newcommand{\fpcm}{{$\cF$-PCM}}
\def\fpcm{\calF\textrm{-PCM}}
\def\fpcf{\calF\textrm{-PCF}}
\newcommand{\reps}{\mathsf{Reps}_{\cov}}
\newcommand{\CovP}{\mathscr{P}^\cI_\cov}
\begin{document}
\title{Generalized Center Problems with Outliers}
\author{Deeparnab Chakrabarty\footnote{Department of Computer Science, Dartmouth College, Email: deeparnab@dartmouth.edu}\and  Maryam Negahbani\footnote{Department of Computer Science, Dartmouth College Email: maryam@cs.dartmouth.edu} }
\date{}
\maketitle

\begin{abstract}
We study the $\calF$-center problem with outliers: given a metric space $(X,d)$, a general down-closed family $\calF$ of subsets of $X$,
and a parameter $m$, we need to locate a subset $S\in \calF$ of centers such that the maximum distance among the closest $m$ points in $X$ to 
$S$ is minimized. 

Our main result is a {\em dichotomy theorem}. Colloquially, we prove that there is an efficient $3$-approximation for the $\calF$-center problem with outliers
 if and only if we can 
efficiently optimize a {\em poly-bounded} linear function over $\calF$ subject to a partition constraint.
One concrete upshot of our result is a polynomial time $3$-approximation for the knapsack center problem with outliers for which no (true) approximation algorithm was known. 
\end{abstract}

\section{Introduction}
The $k$-center problem is a classic discrete optimization problem with numerous applications.
Given a metric space $(X,d)$ and a positive integer $k$, the objective is to choose a subset $S\subseteq X$ of at most $k$ points
such that $\max_{v\in X} d(v,S)$ is minimized, where $d(v,S) = \min_{u\in S} d(v,u)$. 
Informally, the problem is to open $k$ centers to serve all points, minimizing the maximum distance to service.
This problem has been studied for at least 50 years~\cite{Hakimi64, Hakimi65}, is NP-hard to approximate 
to a factor better than $2$~\cite{HN79}, and has a simple $2$-approximation algorithm~\cite{Gon85, HS85a}.

In many applications one is interested in a nuanced version of the problem where instead of serving all points in $X$, the objective is to serve at least a certain number of points. This is the so-called $k$-center with outliers version, or the 
{\em robust $k$-center} problem. This problem was first studied by Charikar~\etal in~\cite{CKMN01} 
which gives a $3$-approximation for the problem. A best possible $2$-approximation algorithm was recently given by Chakrabarty~\etal in~\cite{CGK16} (see also the paper~\cite{HPST17} by Harris~\etal). 

Another generalization of the $k$-center problem arises when the location of centers has more restrictions.
For instance, if each point in $X$ has a different weight and the constraint is that the total weight of centers opened is at most $k$.
This problem, now called the {\em knapsack center} problem, was studied by Hochbaum and Shmoys in~\cite{HS86} which gives a $3$-approximation for the problem. To take another instance, $X$ could be vectors in high dimension and the centers picked need to be linearly independent vectors. This motivates the {\em matroid center} problem where the set of centers must be an independent set in a matroid. Chen \etal give a $3$-approximation for this problem in~\cite{CLLW13}.

Naturally, the two aforementioned generalizations can be taken together. Indeed, for the {\em robust matroid center} problem, that is, the problem of picking centers which are an independent set and only $m$ points need to be served, there is a $7$-approximation algorithm in~\cite{CLLW13}. This was recently improved to a $3$-approximation in~\cite{HPST17}. The {\em robust knapsack center} problem, however, has had no non-trivial approximation algorithm till this work. Both~\cite{CLLW13} and~\cite{HPST17} give {\em bi-criteria} $3$-approximation algorithms which violate the knapsack constraint by $(1+\eps)$ (the running time of their algorithm is exponential in $1/\eps$). 

\paragraph*{Our Contributions}

Motivated by the state-of-affairs of the robust knapsack center problem, we study a broad generalization of the problems mentioned above.
Let $\calF$ be a general down-closed\footnote{if $A\in \calF$ and $B\subseteq A$, then $B\in \calF$.} family of subsets over $X$.
In the {\em robust $\calF$-center} problem we are given a metric space $(X,d)$, a parameter $m$, and the objective is to select a subset $S\in \calF$ such that $\min_{T\subseteq X, |T|=m} \max_{v\in T} d(v,S)$ is minimized. That is, the maximum distance of service of the closest $m$ points is minimized. 

Observe that if $\calF := \{A: w(A) \leq k\}$ then we get the robust knapsack center problem, and if $\calF$ is the collection of independent sets of a matroid, then we get the robust matroid center problem. But this generalization captures a host of other problems. For instance, one can consider multiple (but constant) knapsack constraints. Indeed, this was studied in both~\cite{HS86} and~\cite{CLLW13}. The former\footnote{The complete proofs can be found in the STOC 1984 version of ~\cite{HS86}} only looks at the version {\em without} outliers and gives a polynomial time $3$-approximation in the case when the weights are all polynomially bounded. The latter proves that when the weights are not polynomially bounded, there can be no approximation algorithm via a reduction to the {\sc Subset Sum} problem, and gives a $3$-approximation violating each knapsack constraint by at most $(1+\eps)$ multiplicative factor.

 Another instance is a single knapsack constraint along with a single matroid constraint. To our knowledge, this problem has not been studied earlier even in the case when outliers are not allowed.
This problem seems natural: for instance, when the points are high dimensional vectors with weights and the collection of centers needs to be a linearly independent set with total weight at most $k$. 

The complexity of the robust $\calF$-center problem naturally depends on the complexity of $\calF$. To understand this, we define the following optimization problem which depends only on the set-system $(X,\calF)$. We call it the $\calF${\em-maximization under partition constraints} or simply $\fpcm$. In this problem, one is given an arbitrary partition $\cP$ of $X$ along with $\calF$, and a {\em poly-bounded} (the range is at most a polynomial in $|X|$) value $\val(x)$ on each $x\in X$. The objective is to find a set $S\in \calF$
maximizing $\val(S)$ such that $S$ contains at most one element from each part of $\cP$. Our main result stated colloquially (and formally stated as Theorem~\ref{thm:1} and Theorem~\ref{thm:2} in Section~\ref{sec:prelims}) is the following dichotomy theorem\footnote{We are deliberately being inaccurate here. We should state the theorem for the more general {\em supplier} version where the set $X$ is partitioned into $F\cup C$ and only the points in $C$ need to be covered and only the centers in $F$ can be opened. Being more general, the algorithmic results are therefore stronger. On the other hand, we weren't able (and didn't try too hard) to make our hardness go through for the center version. In the Introduction we stick with the center version and switch to the supplier in the more formal subsequent sections.}.

\begin{theorem*} 
{\em For any down-closed family $(X,\calF)$, the robust $\calF$-center problem has an efficient $3$-approximation algorithm if the $\fpcm$ problem
can be solved in polynomial time. Otherwise, there is no efficient {\em non-trivial} approximation algorithm for the robust $\calF$-center problem.}
\end{theorem*}
Note that in general, we are not concerned about how $\calF$ is represented, because the only place the algorithm checks if a set $S$ is in $\calF$ is perhaps for solving the $\fpcm$ problem. So one can choose a representation that works best for the $\fpcm$ solver.

A series of corollaries follow from the above theorem. These are summarized in Table~\ref{tbl:1}.
\begin{asparaitem}
	\item When $\calF =\{A: w(A) \leq k\}$, the $\fpcm$ problem can be solved in polynomial time via dynamic programming. This crucially uses that the $\val$
	is poly-bounded. Therefore we get a $3$-approximation for the robust knapsack center problem. (Theorem~\ref{thm:rkn})
	\item When $\calF$ is the independent set of a matroid, then the $\fpcm$ problem is a matroid intersection problem. Therefore  we get a $3$-approximation for the robust matroid center problem recovering the result from~\cite{HPST17}. (Theorem~\ref{thm:rmc})
	\item When  $\calF = \{A: w_1(A) \leq k_1, w_2(A) \leq k_2, \ldots, w_d(A) \leq k_d\}$ is defined by $d$ weight functions {\em and} each weight function $w_i$ is {\em poly-bounded}, then $\fpcm$ can be solved efficiently using dynamic programming.
	Therefore we get a $3$-approximation algorithm for the robust multi-knapsack center problem, extending the result in~\cite{HS86} to the case with outliers. (Theorem~\ref{thm:rmkn})
	\item When $\calF$ is given by the intersection of a single knapsack and a single matroid constraint, then we don't know the complexity. However, when the weight function $w(\cdot)$ is poly-bounded and the matroid is representable, then we can give a {\em randomized} algorithm for the $\fpcm$ problem via a reduction to the exact matroid intersection problem.
	Therefore, we get a randomized $3$-approximation for this special case of robust knapsack-and-matroid center problem (Theorem~\ref{thm:knandm}).
\end{asparaitem} \smallskip

\noindent
{\bf Remark 1: The Zero Outlier Case.}
At this juncture, the reader may wonder about the complexity of the $\calF$-center problem which doesn't allow any outliers. 
This is related to the following decision problem. Given $(X,\calF)$ and an arbitrary sub-partition $\cP$ of $X$, the problem asks whether there is a set $S\in \calF$ such that $S$ contains {\em exactly} one element from each part of $\cP$. We call this  the {\em $\calF$-feasibility under partition constraints} or simply the $\fpcf$ problem. Analogous to the informal theorem from earlier, the $\calF$-center problem (without outliers) has an efficient $3$-approximation algorithm if the $\fpcf$ problem can be solved efficiently; otherwise the $\calF$-center problem has no non-trivial approximation algorithm. Indeed, this theorem is much simpler to prove and arguably the roots of this lie in~\cite{HS86}. 

This raises the main open question from our paper: {\em what is the relation between the $\fpcf$ and the $\fpcm$ problem?} Clearly, the $\fpcf$ problem is as easy as the $\fpcm$ problem (set all values equal to one in the latter). But is there an $\calF$ such that $\fpcm$ is ``hard'' while $\fpcf$ is ``easy''? One concrete example is the corollary discussed in the last bullet point above. When $\calF$ is a single knapsack constraint and a single matroid constraint, then 
the $\fpcf$ problem is solvable in polynomial time by minimizing a linear function over a matroid polytope and another partition matroid {\em base} polytope. 
As noted above, we don't know the complexity of the $\fpcm$ problem in this case.

\noindent
{\bf Remark 2: Handling Approximations.}
If the $\fpcm$ problem is NP-hard, then the robust $\calF$-center has no non-trivial approximation algorithm. 
However approximation algorithms for $\fpcm$ translate to bi-criteria approximation algorithms for the robust $\calF$-center problem.
More precisely, if we have a $\rho$-approximation for the $\fpcm$ problem ($\rho \leq 1$), then we get a $(3,\rho)$-{\em bi-criteria} approximation algorithm for the robust $\calF$-center problem. That is, we return a solution $S\in \calF$ such that the maximum distance among the closest $\rho\cdot m$ points is at most $3$ times the optimum value.

There could be a different notion of approximation possible for the $\fpcm$ problem. Given an instance, there may be an algorithm which returns a set $S$ whose value is at least the optimum value but $S\in \calF^R$ for some $\calF^R\supseteq \calF$ which is a `relaxation' of $\calF$.
For instance, if $\calF$ is the intersection of multiple (constant) knapsack constraints which are not poly-bounded, then for any constant $\eps> 0$ the $\fpcm$ problem can be solved~\cite{CVZ11, GRSZ14} returning a set with value at least the optimum but violating each constraint by multiplicative $(1+\eps)$. We can use the same to get a polynomial time $3$-approximation for the robust multiple knapsack-center problem if we are allowed to violate the knapsack constraints by $(1+\eps)$.

\begin{table}[!ht]
	\setlength{\arrayrulewidth}{0.2mm}
	\setlength{\tabcolsep}{8pt}
	\renewcommand{\arraystretch}{1.5}
	\begin{center}
	\begin{tabular}{|c | c | c |} 
		\hline
		{\bf The constraint system $\calF$}& {\bf Without Outliers} & {\bf Robust (With Outliers)} \\  
		\hline
		Knapsack Constraint  & 3~\cite{HS86} & {\bf 3} (Theorem~\ref{thm:rkn}) \\ 
		\hline
		Matroid Constraint & 3~\cite{CLLW13} & 3~\cite{HPST17} \\
		\hline
		\makecell{Multiple  Knapsack \\ (poly-bounded weights)} & 3~\cite{HS86} & {\bf 3} (Theorem~\ref{thm:rmkn}) \\
		\hline
		Knapsack and Matroid & {\bf 3} (Theorem~\ref{thm:knandmcenter}) & \makecell{Open \\ {\bf 3} in special case} (Theorem~\ref{thm:knandm}) \\
		\hline
		\makecell{Multiple Knapsacks \\ and Matroid Constraint}&  \makecell{No uni-criteria \\ approximation}  & {\bf 3}, $(1+\varepsilon)$ violating (Theorem~\ref{thm:multiknapsackmat})\\ 
		\hline
	\end{tabular}
	\end{center}
\caption{All the above results can be obtained as corollary or simple extensions to our main result.
	The numbers in bold indicate new results.
}
\label{tbl:1}
\end{table}

\paragraph*{Our Technique}

Although our theorem statement is quite general, the proof is quite easy. Let us begin with the $\calF$-center problem without outliers.
For this, we follow the algorithmic `partitioning' idea outlined in paper~\cite{HS86} by Hochbaum and Shmoys. As is standard, we guess the optimum distance which we assume to be $1$ by scaling. Initially, all points are marked uncovered. Subsequently, we pick {\em any} uncovered point $x$ and consider a subset $B_x$ of points within distance $1$ from it. Note that the optimum solution {\em must} pick at least one point from each $B_x$ to serve $x$. Next, we call $x$ ``responsible'' for all uncovered points within a distance $2$ from it, and mark all these points covered. Observe that all the newly covered points are within distance $3$ from {\em any} point in $B_x$. 
We continue the above procedure till all points are marked covered. Also observe that the $B_x$'s form a sub-partition $\cP$ of the universe where each part has a responsible point. 
By the above two observations, we see that 
the $\fpcf$ problem must have a feasible solution with respect to $\cP$, and any solution to the $\fpcf$ problem gives a $3$-approximation to the $\calF$-center problem.

Handling outliers is a bit trickier. The above argument doesn't work since the `responsible' point may be an outlier in the optimal solution and we can no longer assert that the optimal solution must contain a point from each part. Indeed, the nub of the problem seems to be figuring out which points should be outliers. The $3$-approximation algorithm in~\cite{CKMN01} by Charikar \etal (see also paper~\cite{AF+10}) cleverly chooses the partitioning via a greedy procedure, but their argument seems hard to generalize to other constraints. 

A different attack used in the algorithm in~\cite{CGK16} by Chakrabarty \etal and that in~\cite{HPST17} by Harris \etal is by writing an LP relaxation and using the solution of the LP to recognize the outliers. At a high level, the LP assigns each point $x$ a variable (in this paper we call it $\cov(x)$) that indicates the extent to which $x$ is served. Subsequently, the partitioning procedure described in the first paragraph is run, except the responsible points are considered in decreasing order of $\cov(x)$. The hope is that points assigned higher $\cov(x)$ in the LP solution are less likely to be outliers, and therefore the partition returned by the procedure can be used to recover a $3$-approximate solution. 
This idea does work for the natural LP relaxation of the robust matroid center problem but fails for the natural LP relaxation of the robust knapsack center problem. Indeed, the latter has unbounded integrality gap. 

Our solution is to use the round-or-cut framework that has recently been a powerful tool in designing many approximation algorithms
(see ~\cite{CKK17,L16,L15,AMO14,CFLP00}). 
We consider the following ``coverage polytope'' for the  robust $\calF$-center problem:  the variables are $\cov(x)$ 
denoting the extent to which $x$ is covered by a convex combination of sets $S\in \calF$.
Of course, we cannot hope to efficiently check whether a particular $\cov$ lies in this polytope.
Nevertheless, we show that for any $\cov$ in the coverage polytope, the partitioning procedure when run in the decreasing order of $\cov$, has the property that there {\em exists} a solution $S\in \calF$ intersecting each part at most once which covers at least $m$ points. We can then use the algorithm for $\fpcm$ to find this set.  Furthermore, and more crucially, if the partitioning procedure does not have this property, then we can efficiently 
find a {\em hyperplane separating} $\cov$ from the the coverage polytope. Therefore, we can run the ellipsoid algorithm on the coverage polytope each time either obtaining a separating hyperplane, or obtaining a $\cov$ that leads to a desired partition, and therefore a $3$-approximation.

\section{Preliminaries}\label{sec:prelims}
In this section we give formal definitions and statements of our results.
As mentioned in a footnote in the Introduction, we focus on the supplier version of the problem. 

\begin{definition}[$\cF$-Supplier Problem]\label{deffc}
The input is a metric space $(X,d)$ on a set of points $X=F\cup C$ with distance function $d:X \times X \longrightarrow \mathbb{R}_{\geq 0}$ and $\cF \subseteq 2^F$ a down-closed family of subsets of $F$. The objective is to find $S \in \cF$ such that $\max_{v \in C} d(v,S)$ is minimized. 
\end{definition}

\begin{definition}[{\rfc} Problem]
	The input is an instance of the $\cF$-supplier problem along with an integer parameter $\sfm \in \{0,1,\ldots, |C|\}$.
	The objective is to find $S \in \cF$ and $T\subseteq C$ for which $\lvert T \rvert \geq \sfm$, and $\max_{u \in T} d(u,S)$ is minimized.
\end{definition}
\noindent
Thus an instance $\cI$ of the robust $\cF$-supplier problem is defined by the tuple $(F,C,d,\sfm,\calF)$.
In the definitions above, $F$ and $C$ are often called the set of \emph{facilities} and \emph{customers} respectively. 

Given the set system $\calF$ defined over $F$, we define the following optimization problem.

\begin{definition}[{$\fpcm$} problem]
The input is $\cJ = (F,\cF,\cP,\val)$ where $F$ is a finite set and $\cF \subseteq 2^F$ is a down-closed family, $\cP \subseteq 2^F$ is a sub-partition of $F$, and $\val:F \longrightarrow \{0,1,2,\cdots\}$ is an
integer-valued function with maximum range $|\val|$ 
satisfying: $\forall f_1,f_2 \in A \in \cP$, $\val(f_1) = \val(f_2)$. The objective is to find:
\[
\opt(\cJ)= \max\limits_{S \in \cF} ~~~ \val(S) : ~~~ \lvert S \cap A \rvert \leq 1, ~~\forall A \in \cP 
\]
\end{definition}
\noindent
The next theorem is the main result of the paper.
\begin{theorem}\label{mainthrm}\label{thm:1}
Given a {\rfc} instance $\cI = (F,C,d,\sfm,\cF)$, Let $\cA$ be an algorithm that solves any {$\fpcm$} instance $\cJ = (F,\cF,\cP,\val)$, with $|\val| \leq |C|$, in time bounded by $T_{\cA}(\cJ)$. Then, there is a $3$-approximation algorithm for the {\rfc} instance that runs in time $\mathrm{poly}(\lvert\cI\rvert)T_{\cA}(\cJ)$.
\end{theorem}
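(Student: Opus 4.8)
The plan is to turn the Hochbaum--Shmoys ``ball partitioning'' idea into a \emph{round-or-cut} procedure driven by the ellipsoid method, where each iteration makes a single call to $\cA$. First I would guess the optimal radius $\rho^\star = \opt(\cI)$: it is one of the $O(|\cI|^2)$ pairwise distances, so we enumerate the candidates and, after rescaling, assume $\rho^\star = 1$ for the guess under consideration. For a fixed guess I would work with the \emph{coverage polytope}
\[
\CovP \;=\; \Bigl\{\, \cov\in[0,1]^C \;:\; \cov \le \textstyle\sum_{S\in\cF}\lambda_S\,\chi^S \text{ for some distribution }\lambda \text{ on } \cF,\ \ \textstyle\sum_{x\in C}\cov(x)\ge \sfm \,\Bigr\},
\]
where $\chi^S(x)=1$ iff $d(x,S)\le 1$. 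When $\rho^\star \ge \opt(\cI)$ this set is nonempty (the coverage vector of the optimal solution lies in it). We cannot separate over $\CovP$ efficiently, so instead I would build an oracle that, given any candidate $\cov$, either outputs a genuine $3$-approximate solution to $\cI$, or outputs a hyperplane valid for $\CovP$ but violated by $\cov$.

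Given $\cov$ (having first checked the cheap constraints $\cov\in[0,1]^C$ and $\sum_x\cov(x)\ge\sfm$, which yield trivial cuts when violated), I would run the clustering procedure processing customers in decreasing order of $\cov(x)$: repeatedly take the not-yet-covered customer $x$ of largest $\cov$-value, declare it \emph{responsible}, set $B_x=\{f\in F : d(f,x)\le 1\}$, and mark covered every not-yet-covered customer within distance $2$ of $x$ (the cluster of $x$, of size $n_x\ge 1$). A one-line triangle-inequality argument shows the $B_x$'s are pairwise disjoint, so with $\cP=\{B_x\}_{x\in R}$ ($R$ = responsible set) and $\val(f):=n_x$ for $f\in B_x$, $\val(f):=0$ otherwise, we get a legitimate $\fpcm$ instance $\cJ=(F,\cF,\cP,\val)$ with $|\val|\le|C|$. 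Call $\cA$ to obtain the optimal $S^\star$. If $\val(S^\star)\ge\sfm$, then for each responsible $x$ with $S^\star\cap B_x\neq\emptyset$ pick $f\in S^\star\cap B_x$: each customer in the cluster of $x$ is within distance $2$ of $x$, hence within distance $3$ of $f$; letting $T$ be the union of these clusters gives $|T|=\val(S^\star)\ge\sfm$ and $\max_{u\in T}d(u,S^\star)\le 3$, which we output.

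If instead $\val(S^\star)<\sfm$, I claim $\sum_{x\in R}\cov(x)\,n_x \le \val(S^\star)$ separates $\cov$ from $\CovP$. It is violated because a responsible point has the largest $\cov$-value in its own cluster (all cluster members were uncovered when it was chosen), so $\sum_{x\in R}\cov(x)\,n_x \ge \sum_{x\in R}\sum_{y \in \mathrm{cluster}(x)}\cov(y) = \sum_{y\in C}\cov(y)\ge\sfm>\val(S^\star)$. It is valid for $\CovP$ via a pruning argument: for $\cov\in\CovP$ with witnessing $\lambda$ we have $\cov(x)\le\Pr_{S\sim\lambda}[S\cap B_x\neq\emptyset]$, hence $\sum_{x\in R}\cov(x)\,n_x \le \mathbb{E}_{S\sim\lambda}\bigl[\sum_{x\in R:\,S\cap B_x\neq\emptyset}n_x\bigr]$; and for any $S\in\cF$, deleting from $S$ all but one facility in each $B_x$ it meets is legal (down-closedness, and disjointness of the $B_x$) and produces a feasible $\fpcm$ solution with the same value $\sum_{x\in R:\,S\cap B_x\neq\emptyset}n_x \le \opt(\cJ)=\val(S^\star)$. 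The very same pruning, applied to a support set of $\lambda$ chosen by averaging, also shows $\opt(\cJ)\ge\sfm$ whenever $\cov\in\CovP$, i.e.\ the ``round'' branch is guaranteed to fire for points genuinely in $\CovP$.

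To assemble: for each radius guess, run the ellipsoid method with this oracle; if it ever rounds, output the solution (cost $\le 3\rho^\star$), and otherwise every response is a cut valid for $\CovP$, so the ellipsoid certifies $\CovP=\emptyset$ and we move on. For the guess $\rho^\star=\opt(\cI)$, $\CovP\neq\emptyset$, so the ellipsoid cannot certify emptiness and must round within $\poly(|\cI|)$ iterations, each costing $\poly(|\cI|)+T_{\cA}(\cJ)$, giving the claimed bound. The step I expect to require the most care is precisely this termination argument when $\CovP$ may be low-dimensional or thin: this is handled in the standard round-or-cut manner, using that a nonempty $\CovP$ contains an integral coverage vector and that every cut we generate has integer coefficients and right-hand side bounded by $|C|$. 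The genuinely new ingredient is the single pruning trick that makes one ``at most one facility per part'' partition constraint do double duty — keeping $\opt(\cJ)\ge\sfm$ for fractional points inside $\CovP$, and certifying the separating hyperplane when $\opt(\cJ)<\sfm$.
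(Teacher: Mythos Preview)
Your proposal is correct and follows essentially the same round-or-cut strategy as the paper: guess the radius, cluster greedily in decreasing $\cov$ order to build an $\fpcm$ instance, and either round (when $\opt(\cJ)\ge\sfm$) or output a separating hyperplane whose coefficients are the cluster sizes $n_x$ on the representatives. The only cosmetic difference is that you take the right-hand side of the cut to be $\val(S^\star)=\opt(\cJ)$ directly, whereas the paper rescales the coefficients by $\alpha=\sfm/(\sfm-\tfrac12)$ so that the right-hand side becomes $\sfm$; both versions are valid, and yours is arguably cleaner since it avoids the extra scaling.
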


The next theorem is the (easier) second part of the dichotomy theorem. We show that if {$\fpcm$} cannot be solved, then the corresponding {\rfc} cannot be approximated.

\begin{theorem}\label{compthrm}\label{thm:2}
Given any non-trivial approximation algorithm $\cB$ for the {\rfc} problem that runs in time $\text{T}_{\cB}(|\cI|)$ on instance $\cI$, any {$\fpcm$} instance $\cJ = (F,\cF,\cP,\val)$ can be solved in time $\mathrm{poly}(\lvert\cJ\rvert)\text{T}_{\cB}(|\cI|)$, where $|\cI| = \poly(|\cJ|)$.
\end{theorem}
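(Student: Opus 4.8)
The plan is to reduce {$\fpcm$} to the {\rfc} problem so that the query ``$\opt(\cJ)\ge \sfm$?'' becomes ``does the constructed {\rfc} instance have optimal radius $0$?'', and then to binary search over $\sfm$. Fix a {$\fpcm$} instance $\cJ=(F,\cF,\cP,\val)$. Adjoin one fresh, isolated facility $f_0$ (with $\val(f_0):=0$), put $F':=F\cup\{f_0\}$, and set
\[
\cF'\ :=\ \bigl\{\,S,\ S\cup\{f_0\}\ :\ S\in\cF,\ |S\cap A|\le 1 \text{ for all } A\in\cP\,\bigr\}.
\]
Then $\cF'$ is down-closed over $F'$, it always contains the nonempty set $\{f_0\}$, and $\max_{S\in\cF'}\val(S\cap F)=\opt(\cJ)$ (``$\le$'' is immediate from the definition of $\cF'$; ``$\ge$'' follows by taking an optimal $\fpcm$ solution, which lies in $\cF'$). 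Build a metric on the disjoint union $F'\cup C$, where $C$ consists of $\val(f)$ customers co-located with each facility $f\in F'$ --- so $|C|=\val(F):=\sum_{f\in F}\val(f)$ --- and all remaining pairwise distances equal $1$; since every distance is $0$ or $1$ this is a metric, and since $\val$ is poly-bounded the instance $\cI_\sfm:=(F',C,d,\sfm,\cF')$ has size $\poly(|\cJ|)$ for every $\sfm\in\{0,1,\dots,|C|\}$.

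Two observations drive the correctness. First, for any $S\in\cF'$ the customers at distance $0$ from $S$ are exactly the $\val(S\cap F)$ customers co-located with $S$ (note $f_0$ carries none). Second, for any nonempty $S\in\cF'$ and any $\sfm\le|C|$, the pair $(S,C)$ is a feasible {\rfc} solution of $\cI_\sfm$ with objective at most $1$; since $\{f_0\}\in\cF'$ such an $S$ always exists. Hence $\opt(\cI_\sfm)$, the optimal radius of $\cI_\sfm$, lies in $\{0,1\}$, and $\opt(\cI_\sfm)=0$ iff some $S\in\cF'$ has $\val(S\cap F)\ge\sfm$, i.e.\ iff $\opt(\cJ)\ge\sfm$. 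Now run $\cB$ on $\cI_\sfm$: its output is an honest feasible solution, so the returned objective is $\ge\opt(\cI_\sfm)$; and because $\cB$ is a non-trivial approximation, $\opt(\cI_\sfm)=0$ forces it to return objective $0$. As the objective can only be $0$ or $1$, ``$\cB$ reports $0$ on $\cI_\sfm$'' holds exactly when $\opt(\cJ)\ge\sfm$. This predicate is monotone in $\sfm$, so a binary search over $\sfm\in\{0,\dots,|C|\}$ --- that is, $O(\log|C|)$ invocations of $\cB$ --- recovers $\opt(\cJ)$ exactly; moreover, deleting $f_0$ from the set returned by $\cB$ at the largest successful $\sfm$ yields an optimal $\fpcm$ solution. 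Including the cost of building the instances, the total running time is $\poly(|\cJ|)\cdot T_{\cB}(|\cI|)$ with $|\cI|=\poly(|\cJ|)$, as claimed.

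The argument is short, and I expect the only points needing care to be: (a) the exact meaning of ``non-trivial approximation'' --- all that is used is that an optimum of $0$ must be reported faithfully, which any reasonable notion of non-triviality guarantees; (b) avoiding degenerate instances (e.g.\ when $\cF=\{\emptyset\}$), handled uniformly by the dummy facility $f_0$, which makes every $\cI_\sfm$ admit a radius-$1$ solution so that ``$\cB$ reports $0$'' is never a false positive; and (c) the representation of $\cF'$, which is a non-issue since the {\rfc} algorithm only ever queries $\cF'$ for membership, and testing membership in $\cF'$ reduces trivially to a membership test for $\cF$ plus a check of the partition constraint.
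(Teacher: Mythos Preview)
Your reduction has a genuine gap: you construct the {\rfc} instance over a \emph{different} constraint family $\cF'$ (and ground set $F'$) than the given $\cF$. The hypothesis of the theorem is that $\cB$ approximates the Robust $\cF$-Supplier problem for the \emph{specific} family $\cF$ appearing in $\cJ$; you are not entitled to invoke $\cB$ on an instance whose family is $\cF'$. Your remark~(c) addresses only how one would test membership in $\cF'$, but that is not the issue --- the dichotomy is a statement parameterised by $\cF$, and its contrapositive (``if $\fpcm$ for $\cF$ is hard then Robust $\cF$-Supplier is hard to approximate'') requires the reduction to land in an instance whose family is exactly $\cF$. As written, your argument only shows that Robust $\cF'$-Supplier is hard, for a family $\cF'$ that even depends on the particular partition $\cP$ in the $\fpcm$ input; this does not yield hardness for Robust $\cF$-Supplier.

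The root cause is that you attach $\val(f)$ customers to each facility $f$ individually. With that layout, opening two facilities $f_1,f_2$ from the same part $A$ covers $2\,\val(f_1)$ distinct customers, so the partition constraint is \emph{not} redundant and you are forced to bake it into $\cF'$. The paper avoids this by attaching one block of $n_A$ customers to each \emph{part} $A$ (not to each facility), and setting distance $0$ between facilities in the same part and distance $1$ from any facility in $A$ to any customer of $A$ (all other distances $\infty$). Then any two facilities in $A$ serve exactly the same customers, so extra facilities from a part are useless; after $\cB$ returns some $S\in\cF$ with finite cost, down-closedness of $\cF$ lets one prune $S$ to at most one facility per part without losing any covered customer. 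This keeps the constraint family equal to $\cF$ throughout and makes the reduction go through. With distances in $\{0,1,\infty\}$ the dummy $f_0$ is also unnecessary: any finite-radius output of a non-trivial $\cB$ already certifies $\opt(\cJ)\ge m$, while an infinite-radius output (or failure) signals the guess $m$ is too large. Your binary search over $m$ is fine, and would equally apply to the corrected construction.
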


\begin{proof}
	Given $\cJ$ we construct an instance $\cI$ of the {\rfc} problem. The set of facilities is $F$.
	We describe the set of customers $C$ next. Extend $\cP$ to a partition of $F$ denoted by $\mathcal{Q} = \mathcal{P} \cup \{\{f\}: f\in F, \nexists A \in \mathcal{P} : f \in A\}$. By definition of the $\fpcm$ problem, for any $A \in \cQ$, there exists a number $n_A \in \{0,1,2,\cdots\}$ such that $\val(f) = n_A$, for all $f \in A$. 
For each $A \in \cQ$, we add $n_A$ customers to $C$ and call this set $\phi(A)$. 

We now describe the distance function. For each $A\in \cQ$, for each pair $u,v\in A$ and $u,v\in \phi(A)$ we have $d(u,v) = 0$.
For each $u\in A$ and $v\in \phi(A)$, we have $d(u,v) = 1$. All other distances are $\infty$. Observe that $d$ satisfies the triangle inequality.

Finally, we let $\sfm$ be our guess of the value of $\opt(\cJ)$. This completes the description of $\cI = (F,C,d,\sfm,\cF)$. 

Suppose algorithm $\cB$ finds $S \in \cF$ and $T \subseteq C$ such that $\lvert T \rvert \geq \sfm$ and $\max_{v \in T} d(v,S) \leq \alpha \opt(\cI) = \alpha$. Without loss of generality, we can assume $\lvert S \cap A \rvert \leq 1$ for all $A \in \cP$, which implies that $S$ is a feasible solution for $\cJ$. The reason is, if there exists $f_1,f_2\in S$ for which $f_1,f_2\in A \in \cP$, then $S\backslash f_2$ is still an $\alpha$-approximate solution for $\cI$. To see why this is true, recall that $\cF$ is down-closed so $S \backslash f_2 \in \cF$ and since $d(f_1,f_2) = 0$ then $S\backslash f_2$ covers all the customers that $S$ covers. 
Next, we assert that  $\val(S) \geq m = \opt(\cJ)$ since
$\sfm \leq \lvert T \rvert \leq \lvert \{v \in C : d(v,S) \leq \alpha\} \rvert = \sum_{A \in \cQ: \lvert S \cap A\rvert = 1} \lvert \phi(A)\rvert = \sum_{f \in S} \val(f),$
where the first equality uses the fact that for $v \in C$ and $f \in A \in \cQ$, $d(v,f) \leq \alpha$ only if $v \in \phi(A)$.

Finally, since $\val$ is poly-bounded which makes the value of $\opt(\cJ)$ to be bounded by $\mathrm{poly}(\lvert\cJ\rvert)$, one can iterate over all the possible values for $\opt(\cJ)$ to guess $\sfm$.
\end{proof}

We end this section by setting a few notations used in the remainder of the paper. 
For any $u\in F\cup C$ we let $B_C(u,r)$ be the customers in a ball of radius $r$ around $u$ i.e. $B_C(u,r) = \{v \in C : d(u,v) \leq r\}$. Similarly, define $B_F(u,r)$ as the facilities in a ball of radius $r$ around $u$ i.e. for $u\in F \cup C$, $B_F(u,r) = \{f \in F : d(u,f) \leq r\}$. 

\section{Algorithm and Analysis : Proof of Theorem~\ref{thm:1}}\label{mainApp}

We fix $\cI = (F,C,d,\cF, \sfm)$  the instance of the {\rfc} problem. 
We use $\hopt$ to denote \emph{our guess} of the value of the optimal solution. Without loss of generality, we can always assume $\hopt=1$ because if not, we could scale $d$ to meet this criteria.
Our objective henceforth is to either find a set $S \in \cF$ such that $\lvert \{ v \in C: d(v,S) \leq 1 \} \rvert \geq \sfm$, or prove
that $\opt(\cI) > 1$.

There are two parts to our proof. The first part is a partitioning procedure which given an assignment $\cov(v)\in \R_{\geq 0}$ for every customer 
$v\in C$, constructs an instance $\cJ$ of $\fpcm$. We call $\cov$ {\em valuable} if $\cJ$ has optimum value $\geq m$. Our procedure ensures that if $\cov$ is valuable, then we get a $3$-approximate solution for $\cI$.
This is described in Section~\ref{subsecRed}. 
The second part contains the proof of Theorem~\ref{thm:1}. In particular we show how using the round-and-cut methodology 
using polynomially many calls to $\cA$ (recall this is the algorithm for $\fpcm$) we can either prove $\opt(\cI) > 1$, or
find a valuable $\cov$. This is described in Section~\ref{sec:rndandcut}.

\subsection{Reduction to \texorpdfstring{$\fpcm$}{Lg}}\label{subsecRed}
Algorithm~\ref{alg:1} inputs an assignment $\{\cov(v) \in \R_{\geq 0} :v\in  C\}$.
It returns a sub-partition $\cP$
of $F$ and assigns $\val:F\to \{0,1,\cdots,|C|\}$ such that all the facilities in the same part of $\cP$ get the same $\val$. That is, 
it returns an $\fpcm$ instance $\cJ = (F,\cF, \cP, \val)$ with $|\val| \leq |C|$.

The algorithm maintains a set of \emph{uncovered} customers $U\subseteq C$ initialized to $C$ (Line~\ref{ln:1}). 
In each iteration, it picks the customer $v \in U$ with maximum $\cov$ (Line~\ref{ln:greedy}) and adds it to set $\reps$ (Line~\ref{ln:rep}). 
We add the set of facilities $B_F(v,1)$ at distance $1$ from $v$ to $\cP$ (Line~\ref{ln:nearby-fac},~\ref{ln:form-part}).
For each such $v$, we eke out the subset $\chld(v) = B_C(v,2) \cap U$ of currently uncovered clients ``represented'' by $v$ (Line~\ref{ln:chld}).
For every facility $f\in B_F(v,1)$ we define its \emph{value} to be: $\val(f) = \lvert \chld(v) \rvert$ (Line~\ref{ln:set-val}).
At the end of the iteration, $\chld(v)$ is removed from $U$ (Line~\ref{ln:12}) and the loop continues till $U$ becomes $\emptyset$.
This way, the algorithm partitions $C$ into $\{\chld(v) : v \in \reps\}$ (see fact\eqref{partfact}). 
Claim~\ref{subpartclm} shows that $\cP$ is a sub-partition of $F$. 

\begin{algorithm}[!ht]
		\caption{$\cF$-PCM instance construction}
		\label{constAlgo}\label{alg:1}
		\begin{algorithmic}[1]
			\Require {\rfc} instance $(F,C,d,\sfm,\cF)$ and assignment $\{\cov(v) \in \R_{\geq 0} :v\in  C\}$
			\Ensure $\cF$-PCM instance $(F,\cF,\mathcal{P},\val)$
			\State $U \leftarrow C$ \Comment{The set of uncovered customers} \label{ln:1}
			\State $\reps \leftarrow \emptyset$ \Comment{The set of representatives}
			\State $\mathcal{P} \leftarrow \emptyset$ \Comment{The sub-partition of $F$ that will be returned}
			\While{ $U \neq \emptyset$} 
	        \State $v \leftarrow \arg\max_{v\in U} \cov(v)$ \Comment{The first customer in $U$ in non-increasing $\cov$ order} \label{ln:6} \label{ln:greedy}
	        \State $\reps \leftarrow \reps \cup v$ \label{ln:7} \label{ln:rep}
	        \State $B_F(v,1) \leftarrow \{f \in F: d(f,v) \leq 1\}$ \Comment{Facilities that can cover $v$ with a ball of radius 1} \label{ln:8} \label{ln:nearby-fac}
			\State $\mathcal{P} \leftarrow \mathcal{P} \cup B_F(v,1)$ \label{ln:9} \label{ln:form-part}
	        \State $\chld(v) \leftarrow \{u \in U: d(u,v) \leq 2\}$\Comment{Equals to $B_C(v,2)\cap U$} \label{ln:chld}	        \label{ln:10}
	        \State $\val(f) \leftarrow \lvert \chld(v) \rvert \ \ \forall f \in B_F(v,1)$  \label{ln:11} \label{ln:set-val}
	        \State $U \leftarrow U \backslash \chld(v)$ \label{ln:12} \label{ln:remove-from-U}
			\EndWhile
		\end{algorithmic}
	\end{algorithm}
	
	\begin{fact}\label{partfact}
	$\{\chld(v) : v \in \reps\}$ is a partition of $C$.
	\end{fact}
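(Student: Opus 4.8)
The plan is to verify the two defining properties of a partition: that the sets $\chld(v)$ for $v\in\reps$ are pairwise disjoint, and that their union is all of $C$. Both follow by tracking the invariant that $U$ is exactly the set of customers not yet placed in any $\chld$-set, together with the observation that the while-loop only terminates when $U=\emptyset$.

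For disjointness, I would argue as follows. Fix two distinct representatives $v$ and $v'$, and without loss of generality suppose $v$ is chosen in an earlier iteration than $v'$. At the moment $v$ is processed, the algorithm sets $\chld(v)\leftarrow B_C(v,2)\cap U$ and then immediately executes $U\leftarrow U\setminus\chld(v)$ on Line~\ref{ln:remove-from-U}. Since $U$ only shrinks over the course of the algorithm, every element of $\chld(v)$ has been removed from $U$ before the iteration in which $v'$ is processed. But $\chld(v')\subseteq U$ at the time $v'$ is processed (it is defined as a subset of the current $U$ via the intersection with $U$ on Line~\ref{ln:chld}), so $\chld(v')\cap\chld(v)=\emptyset$. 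This gives pairwise disjointness.

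For coverage, I would use the loop invariant that after each iteration, $U$ equals $C\setminus\bigcup_{v\in\reps}\chld(v)$: initially $\reps=\emptyset$ and $U=C$, and each iteration adds one $v$ to $\reps$ while removing exactly $\chld(v)$ from $U$. Since the while-loop on exit has $U=\emptyset$, at termination we get $C=\bigcup_{v\in\reps}\chld(v)$. (One should also note the loop does terminate: each iteration picks some $v\in U$ and removes at least $v$ itself from $U$, because $v\in B_C(v,2)\cap U=\chld(v)$ as $d(v,v)=0\le 2$; hence $|U|$ strictly decreases and after at most $|C|$ iterations $U=\emptyset$.)

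Neither step presents a genuine obstacle — this is a routine invariant-tracking argument about the imperative pseudocode. The only mild subtlety worth stating explicitly is that $v\in\chld(v)$, which is what guarantees both that the representatives themselves are covered and that the algorithm makes progress in every iteration; I would call this out so that the later use of Fact~\ref{partfact} (e.g. in counting covered customers as $\sum_{v\in\reps}|\chld(v)|$) rests on a clean statement.
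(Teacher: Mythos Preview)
Your argument is correct; both the disjointness and coverage claims follow exactly as you describe from the monotone shrinking of $U$ and the loop-termination condition. The paper itself states this fact without proof, treating it as immediate from the pseudocode, so your invariant-tracking argument simply spells out the obvious details the paper chose to omit.
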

	
	\begin{fact}\label{greedyrule}
	For a $v \in \reps$ and any $u \in \chld(v)$ line 6 of the algorithm implies $\cov(v) \geq \cov(u)$.
	\end{fact}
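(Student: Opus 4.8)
\textbf{Proof proposal for Fact~\ref{greedyrule}.}
The plan is simply to unwind the definitions in Algorithm~\ref{alg:1} and appeal directly to the greedy selection rule. Fix $v \in \reps$ and consider the (unique) iteration of the \textbf{while}-loop in which $v$ is added to $\reps$ on Line~\ref{ln:rep}. Let $U_v$ denote the contents of the uncovered set $U$ at the beginning of that iteration. By Line~\ref{ln:greedy}, $v$ is chosen as $\arg\max_{v' \in U_v} \cov(v')$; in particular $\cov(v) \geq \cov(u)$ for \emph{every} $u \in U_v$.

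It remains to observe that $\chld(v) \subseteq U_v$. This is immediate from Line~\ref{ln:chld}, where $\chld(v)$ is defined as $\{u \in U : d(u,v) \leq 2\}$ with $U$ still equal to $U_v$ at that point of the iteration, since $U$ is modified only later, on Line~\ref{ln:remove-from-U}. Hence every $u \in \chld(v)$ lies in $U_v$, and combining with the previous paragraph yields $\cov(v) \geq \cov(u)$, which is exactly the claim.

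There is essentially no obstacle here: the statement is a direct consequence of the greedy choice on Line~\ref{ln:greedy}, and the only point requiring (minor) care is the bookkeeping of \emph{when} the sets $U$ and $\chld(v)$ and the maximizer $v$ are taken, i.e.\ the observation that $\chld(v)$ is carved out of precisely the same snapshot $U_v$ from which $v$ was selected. If one wishes to be pedantic about the statement being well-posed, one can additionally invoke Fact~\ref{partfact} to note that each $u \in C$ belongs to $\chld(v)$ for exactly one $v \in \reps$, but this is not needed for the inequality itself.
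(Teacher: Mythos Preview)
Your proof is correct and is exactly the reasoning the paper intends: the paper states this as a \emph{Fact} with no proof, since it follows immediately from the greedy selection on Line~\ref{ln:greedy} together with the observation that $\chld(v)$ is taken from the same snapshot $U_v$. There is nothing to add.
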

	
	\begin{claim}\label{subpartclm}
	$\mathcal{P}$ constructed by Algorithm~\ref{constAlgo} is a sub-partition of $F$.
	\end{claim}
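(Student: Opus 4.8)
The plan is to show that the sets $B_F(v,1)$ added to $\cP$ across the iterations of the while loop are pairwise disjoint; this is exactly what it means for $\cP$ to be a sub-partition of $F$. Fix two distinct representatives $v, v' \in \reps$, and without loss of generality suppose $v$ was selected in an earlier iteration than $v'$. The key observation is that at the moment $v'$ is selected on Line~\ref{ln:greedy}, we have $v' \in U$, so $v'$ survived the removal on Line~\ref{ln:remove-from-U} in the iteration of $v$; that is, $v' \notin \chld(v) = B_C(v,2) \cap U$. Since $v' \in U$ at that point (the set $U$ only shrinks over time, and $v'\in U$ when it is picked, so $v'$ was in $U$ at every earlier iteration too), the only way $v' \notin \chld(v)$ is that $d(v,v') > 2$.

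Now I would derive a contradiction from assuming $f \in B_F(v,1) \cap B_F(v',1)$ for some facility $f$. By definition of these balls, $d(v,f) \leq 1$ and $d(v',f) \leq 1$, so by the triangle inequality $d(v,v') \leq d(v,f) + d(f,v') \leq 2$, contradicting $d(v,v') > 2$. Hence $B_F(v,1) \cap B_F(v',1) = \emptyset$ for every pair of distinct representatives.

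Finally, each part of $\cP$ is of the form $B_F(v,1)$ for some $v\in\reps$, and although $B_F(v,1)$ could in principle be empty, a sub-partition is allowed to be built from the nonempty sets among these (or one simply notes the empty ones contribute nothing); in any case the collection is a family of pairwise disjoint subsets of $F$, i.e.\ a sub-partition of $F$, which is what we wanted.

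The argument is entirely routine once one pins down the right invariant; the only mild subtlety is making sure that $v'$ really is in $U$ throughout all of $v$'s iteration (so that ``$v'\notin \chld(v)$'' genuinely forces $d(v,v')>2$ rather than being vacuous because $v'$ had already been removed) — but this is immediate since a customer can only be removed from $U$ in the iteration in which its own representative is chosen, and $v' \neq v$ together with $v'\in\reps$ means $v'$ is removed only in its own (later) iteration.
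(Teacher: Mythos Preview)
Your proof is correct and follows essentially the same approach as the paper's: both argue that distinct representatives $v,v'\in\reps$ satisfy $d(v,v')>2$ (because the later one survived the $\chld$-removal of the earlier one), and then use the triangle inequality to conclude $B_F(v,1)\cap B_F(v',1)=\emptyset$. You simply spell out a few more details (why $v'$ is still in $U$ during $v$'s iteration, the possibility of empty parts) that the paper leaves implicit.
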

	\begin{proof}
	By Line~\ref{ln:remove-from-U} of the algorithm, for each $u,v \in \reps$ we have $d(u,v) > 2$ hence $B_F(u,1) \cap B_F(v,1) = \emptyset$ implying $\mathcal{P}$ is a sub-partition of $F$.
	\end{proof}
	
	\begin{claim}\label{bll3clm}
	For each $v \in \reps$ and $f \in B_F(v,1)$, $\chld(v) \subseteq B_C(f,3)$.
	\end{claim}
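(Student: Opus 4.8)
The plan is to prove Claim~\ref{bll3clm} by a direct application of the triangle inequality, using the two distance bounds that come built into the definitions of $\chld(v)$ and $B_F(v,1)$. There is essentially no cleverness required here; the claim is the quantitative heart of the ``radius blows up by a factor of $3$'' phenomenon already foreshadowed in the technique overview in the Introduction.

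Concretely, I would fix $v \in \reps$ and $f \in B_F(v,1)$, and then take an arbitrary $u \in \chld(v)$. From Line~\ref{ln:chld} of Algorithm~\ref{constAlgo} we have $\chld(v) = B_C(v,2) \cap U$, so in particular $u \in C$ and $d(u,v) \leq 2$. From $f \in B_F(v,1)$ (equivalently, Line~\ref{ln:nearby-fac}) we have $d(f,v) \leq 1$. The metric $d$ satisfies the triangle inequality, hence
\[
d(u,f) \;\leq\; d(u,v) + d(v,f) \;\leq\; 2 + 1 \;=\; 3.
\]
Since $u \in C$ and $d(u,f) \leq 3$, this says exactly that $u \in B_C(f,3)$. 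As $u \in \chld(v)$ was arbitrary, we conclude $\chld(v) \subseteq B_C(f,3)$, which is the claim.

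I do not anticipate any real obstacle: the only thing to be slightly careful about is bookkeeping, namely that $\chld(v)$ consists of customers (so that membership in $B_C(f,3)$, defined as a subset of $C$, is the right target) and that $d(f,v) = d(v,f)$ by symmetry of the metric. Both are immediate from the setup in Section~\ref{sec:prelims} and the definition of $\chld$. So the whole proof is a single line of triangle inequality plus the remark that $\chld(v) \subseteq C$.
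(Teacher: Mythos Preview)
Your proposal is correct and follows essentially the same approach as the paper: fix $u\in \chld(v)$, use $d(u,v)\leq 2$ and $d(f,v)\leq 1$, and apply the triangle inequality to get $d(u,f)\leq 3$. The only difference is that you spell out the bookkeeping (that $u\in C$, symmetry of $d$) more explicitly than the paper does.
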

	\begin{proof}
	For any $u\in \chld(v)$, we have $d(u,v) \leq 2$ and since $d(f,v) \leq 1$, the fact that $d$ is metric implies $d(f,u) \leq 3$.
	\end{proof}
	
	\begin{definition}\label{ISdef}
	For $S \subseteq F$ let $R(S) = \{v \in \reps : B_F(v,1) \cap S \neq \emptyset \}$, be the set of representative customers in $\reps$ that are covered by balls of radius 1 around the facilities in $S$.
	\end{definition}
	
	\begin{claim}\label{valclm}
	Let $S \in \cF$ be any feasible solution of the $\cF$-PCM instance constructed by Algorithm~\ref{constAlgo}. Then,
	    $\sum_{f \in S} \val(f) = \sum_{v\in R(S)} \lvert \chld(v) \rvert.$
	\end{claim}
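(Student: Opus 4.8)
The plan is to exploit two structural facts already in hand. First, since each part of $\cP$ is a ball $B_F(v,1)$ with $v \in \reps$ and $S$ is feasible for the constructed $\fpcm$ instance, we have $|S \cap B_F(v,1)| \le 1$ for every $v \in \reps$. Second, Claim~\ref{subpartclm} says the balls $\{B_F(v,1) : v \in \reps\}$ are pairwise disjoint, so every facility lies in at most one part. These two facts let me match up the two sums term by term.

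I would first note that Algorithm~\ref{constAlgo} only ever assigns a value to a facility via Line~\ref{ln:set-val}, and it does so precisely for facilities lying in some part $B_F(v,1)$; a facility outside $\bigcup_{v \in \reps} B_F(v,1)$ has $\val = 0$. Hence $\sum_{f \in S} \val(f) = \sum_{f \in S \cap \bigcup_{v \in \reps} B_F(v,1)} \val(f)$. By disjointness of the parts, every such $f$ lies in exactly one ball $B_F(v_f,1)$, and then Line~\ref{ln:set-val} gives $\val(f) = |\chld(v_f)|$. The constraint $|S \cap B_F(v,1)| \le 1$ makes the map $f \mapsto v_f$ injective on $S \cap \bigcup_{v \in \reps} B_F(v,1)$, and by Definition~\ref{ISdef} its image is exactly $R(S)$: a representative $v$ lies in $R(S)$ iff $B_F(v,1) \cap S \ne \emptyset$, i.e.\ iff some (necessarily unique) $f \in S$ has $v_f = v$. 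Thus $f \mapsto v_f$ is a bijection from $S \cap \bigcup_{v \in \reps} B_F(v,1)$ onto $R(S)$.

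Substituting through this bijection gives $\sum_{f \in S} \val(f) = \sum_{v \in R(S)} |\chld(v)|$, which is the claim. I do not expect any real obstacle: the argument is pure bookkeeping, and the only point requiring care is that $\val$ is well defined at all — i.e.\ no facility is assigned two conflicting values on Line~\ref{ln:set-val} — which is again exactly the disjointness guaranteed by Claim~\ref{subpartclm}. (Any empty ball $B_F(v,1)$ contributes nothing to either side and can simply be ignored.)
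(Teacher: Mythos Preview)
Your proof is correct and is essentially the same argument as the paper's: both observe that $\val(f)$ vanishes outside $\bigcup_{v\in\reps} B_F(v,1)$, use the feasibility constraint $|S\cap B_F(v,1)|\le 1$ to get a bijection between the contributing $f\in S$ and $R(S)$, and read off $\val(f)=|\chld(v)|$ from Line~\ref{ln:set-val}. You are a bit more explicit about invoking Claim~\ref{subpartclm} for well-definedness of $\val$, but the substance is identical.
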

	\begin{proof}
	For an $f \in S$, according to Line~\ref{ln:set-val} of the algorithm, $\val(f) > 0$ only if $f \in B_F(v,1)$ for some $v \in \reps$. Also, by definition of the $\cF$-PCM problem, $\lvert B_F(v,1) \cap S\rvert\ \leq 1$ for any $v \in \reps$. That is, there is exactly one $f \in B_F(v,1) \cap S$ for each $v \in R(S)$ and again by line~\ref{ln:set-val}, $\val(f) = \lvert \chld(v) \rvert$. Summing this equality over all $v \in R(S)$ and the corresponding $f \in B_F(v,1) \cap S$ proves the claim.
	\end{proof}

		\begin{claim}\label{slnConstLma}
			Let $\cI = (F,C,d,\sfm,\cF)$ be a {\rfc} instance and let $\cov:C\to \mathbb{R}_{\geq 0}$ be a coverage function.
			Let $\cJ = (F,\cF,\cP,\val)$ be the $\fpcm$ instance returned by Algorithm~\ref{constAlgo} on input $\cI$ and $\cov$.
			Given any feasible solution $S$ to $\cJ$, we can cover at least $\val(S)$ customers of $C$ by opening radius $3$-balls around each facility in $S$.

	\end{claim}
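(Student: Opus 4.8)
The plan is to simply chain together the structural claims already established for Algorithm~\ref{constAlgo}. The quantity we want to lower-bound is $\lvert \bigcup_{f \in S} B_C(f,3)\rvert$, the number of customers lying within distance $3$ of some facility opened by $S$. The idea is to exhibit a large, explicitly disjoint family of customer sets that is contained in $\bigcup_{f\in S} B_C(f,3)$, and then identify its total size with $\val(S)$.

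First I would recall the set $R(S) = \{v \in \reps : B_F(v,1)\cap S \neq \emptyset\}$ from Definition~\ref{ISdef}. For each $v \in R(S)$, pick a facility $f_v \in B_F(v,1)\cap S$ (feasibility of $S$ for $\cJ$ guarantees this intersection has size exactly $1$, but any choice suffices). By Claim~\ref{bll3clm}, $\chld(v) \subseteq B_C(f_v,3) \subseteq \bigcup_{f\in S} B_C(f,3)$. Taking the union over $v \in R(S)$ gives
\[
\bigcup_{v \in R(S)} \chld(v) \;\subseteq\; \bigcup_{f \in S} B_C(f,3).
\]
Next I would invoke Fact~\ref{partfact}, which says $\{\chld(v) : v \in \reps\}$ is a partition of $C$; in particular the sets $\chld(v)$ for $v \in R(S) \subseteq \reps$ are pairwise disjoint, so $\bigl\lvert \bigcup_{v\in R(S)} \chld(v)\bigr\rvert = \sum_{v \in R(S)} \lvert \chld(v)\rvert$. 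Finally, Claim~\ref{valclm} identifies $\sum_{v\in R(S)} \lvert\chld(v)\rvert$ with $\sum_{f\in S}\val(f) = \val(S)$. Combining the displayed containment with these two equalities yields $\bigl\lvert \bigcup_{f\in S} B_C(f,3)\bigr\rvert \geq \val(S)$, which is exactly the claim.

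Honestly I do not anticipate a genuine obstacle here: the claim is a bookkeeping consequence of Facts~\ref{partfact}, Claim~\ref{bll3clm}, and Claim~\ref{valclm}, and the only thing to be slightly careful about is making sure the $\chld(v)$'s summed over $R(S)$ are counted without overlap (which is where the partition fact is used) and that we are taking $B_C$ of the facility, not of the representative (which is precisely the content of Claim~\ref{bll3clm}, using the triangle inequality through the radius-$1$ ball). No new computation is needed beyond assembling these pieces.
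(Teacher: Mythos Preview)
Your proposal is correct and follows essentially the same route as the paper: both arguments combine Claim~\ref{valclm}, Fact~\ref{partfact}, and Claim~\ref{bll3clm} to show $\bigcup_{v\in R(S)}\chld(v)\subseteq\{u\in C:d(u,S)\le 3\}$ and that the left side has cardinality $\val(S)$. The paper's write-up is slightly terser but the logical skeleton is identical.
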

	
	\begin{proof}
		By considering $R(S)$ from Definition~\ref{ISdef}, Claim~\ref{valclm} gives: 
		$\sum_{v\in R(S)} \lvert \chld(v) \rvert=\sum_{f \in S} \val(f)$. 
		From Fact~\ref{partfact}, we get that for all $u,v \in \reps, \chld(u) \cap \chld(v) = \emptyset$.
		Thus, $\lvert \bigcup_{v\in R(S)} \chld(v)\rvert = \sum_{v \in R(S)} \lvert \chld(v) \rvert = \val(S)$.
		Furthermore, by
		Claim~\ref{bll3clm}, $\{ v \in C: d(v,S) \leq 3 \} \supseteq \bigcup_{u\in R(S)} \chld(u)$ 
		implying the size of the former is at least $\val(S)$, thus proving the lemma.

	\end{proof}
		The above claim motivates the following definition of {\em valuable} $\cov$ assignments, and the subsequent lemma.
		\begin{definition} 
		An assignment $\{\cov(v) \in \R_{\geq 0} :v\in  C\}$ 
		is \emph{valuable} with respect to a {\rfc} instance $\cI = (F,C,d,\sfm,\cF$), iff $\opt(\cJ) \geq \sfm$, where $\cJ$ is the $\cF$-PCM instance returned by Algorithm~\ref{constAlgo} from $\cI$ and $\cov$.
	\end{definition}
\begin{lemma}\label{lem:summary}
	Given an instance $\cI$ of the {\rfc} problem with $\opt(\cI) = 1$, and a valuable assignment $\cov$ with respect to it, we can obtain a $3$-approximate solution in time $\poly(|\cI|) + T_\cA(\cJ)$ where $\cJ$ is the instance constructed by Algorithm~\ref{constAlgo} from $\cI$ and $\cov$.
\end{lemma}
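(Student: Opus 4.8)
The plan is to chain together the structural claims already established in this subsection; no new ideas are needed. First I would run Algorithm~\ref{constAlgo} on the input instance $\cI$ together with the given valuable assignment $\cov$. Inspecting the pseudocode, every step --- the greedy choice of the maximum-$\cov$ uncovered customer, forming the balls $B_F(v,1)$ and $B_C(v,2)$, assigning $\val$ via Line~\ref{ln:set-val}, and shrinking $U$ --- takes $\poly(|\cI|)$ time, and the while-loop runs at most $|C|$ times since $v\in\chld(v)$ forces $\chld(v)\neq\emptyset$ and hence $U$ strictly shrinks each iteration. By construction the output is a well-formed $\fpcm$ instance $\cJ=(F,\cF,\cP,\val)$: Claim~\ref{subpartclm} says $\cP$ is a sub-partition of $F$, and by Line~\ref{ln:set-val} all facilities in a common part of $\cP$ receive equal $\val$, with $|\val|\le |C|$.

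Next, since $\cov$ is \emph{valuable} with respect to $\cI$, the definition gives $\opt(\cJ)\ge \sfm$. I would then invoke $\cA$ on $\cJ$; by hypothesis it runs in time $T_\cA(\cJ)$ and, since it \emph{solves} the $\fpcm$ problem, it returns a set $S\in\cF$ that is feasible for $\cJ$ --- i.e.\ $|S\cap A|\le 1$ for every $A\in\cP$ --- and attains $\val(S)=\opt(\cJ)\ge \sfm$.

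Finally I would apply Claim~\ref{slnConstLma} to this $S$: since $S$ is feasible for $\cJ$, that claim yields at least $\val(S)$ customers all within distance $3$ of $S$; concretely, $T:=\{v\in C: d(v,S)\le 3\}$ has $|T|\ge \val(S)\ge \sfm$. Hence $(S,T)$ is a valid solution to the {\rfc} instance $\cI$ with $\max_{v\in T}d(v,S)\le 3$, and since we normalized $\hopt=\opt(\cI)=1$ this is a $3$-approximation. The total running time is the $\poly(|\cI|)$ spent on Algorithm~\ref{constAlgo} plus the $T_\cA(\cJ)$ spent on $\cA$, as claimed. There is no real obstacle here --- all the content lives in Claims~\ref{bll3clm}, \ref{valclm}, and \ref{slnConstLma}, and in the still-pending construction of a valuable $\cov$ via round-or-cut in Section~\ref{sec:rndandcut}; the only points that need care are checking that Algorithm~\ref{constAlgo} is genuinely polynomial-time and that $\cA$ is used to return an \emph{optimal} (not merely feasible) solution to $\cJ$, so that the chain $\val(S)\ge\opt(\cJ)\ge\sfm$ actually holds.
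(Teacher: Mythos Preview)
Your proposal is correct and follows exactly the same line as the paper's proof: use the definition of valuable to get $\opt(\cJ)\ge \sfm$, run $\cA$ to obtain an optimal $S\in\cF$ with $\val(S)\ge \sfm$, and invoke Claim~\ref{slnConstLma} to conclude that $S$ covers at least $\sfm$ customers within radius~$3$. The only difference is that you spell out the polynomial running time of Algorithm~\ref{constAlgo} and the construction of $T$ more explicitly than the paper does.
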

\begin{proof}
Since $\cov$ is valuable, $\opt(\cJ)\geq m$.
We use solver $\cA$ to return an optimal solution $S\in \cF$ with $\val(S) \geq m$.
Claim~\ref{slnConstLma} implies that $S$ is a $3$-approximate solution to $\cI$.
\end{proof}
	
\subsection{The Round and Cut Approach}\label{subsecRound}\label{sec:rndandcut}
If the guess $\hopt = 1$ for $\cI = (F,C,d,\sfm,\cF)$ is at least $\opt(\cI)$, then the following polytope must be non-empty.
To see this, if $S^* \in \cF$ is the optimal solution to $\cI$ then set $z_{S^*} := 1$ and $z_S := 0$ for $S \in \cF \backslash S^*$.

\begin{alignat}{4}
\CovP = \{(\cov(v): v\in C) : 
&& \sum\limits_{v\in C} \cov(v) & \geq & ~~\sfm \tag{$\CovP$.1} \label{eq:P1} \\
\forall v\in C, && ~~\cov(v) - \sum\limits_{\substack{S \in \cF: d(v,S) \leq 1}} z_S &=& ~~0 \tag{$\CovP$.2} \label{eq:P2} \\
&& 	\sum\limits_{S \in \cF} z_S & = & ~~ 1  \tag{$\CovP$.3} \label{eq:P3} \\
\forall S\in \cF,  && z_S &\geq &0\} \notag \tag{$\CovP$.4} \label{eq:P4}
\end{alignat}
Even though $\CovP$ has exponentially many auxiliary variables ($z_S$ for all $S\in \cF$), its dimension is still $|C|$. The following gives a family of valid inequalities for $\CovP$ via Farkas lemma.
\begin{lemma}\label{prehypLma}
Let $\lambda(v)\in \mathbb{R}$ for every $v\in C$ be such that 
\begin{equation}\label{prop1}
    \sum\limits_{ \substack{v \in C:\\ d(v,S) \leq 1}} \lambda(v) \leq \sfm \ \ \ \forall S \in \cF \tag{V1}
\end{equation}
Then any $\cov\in \CovP$ satisfies
\begin{equation}\label{prop2}
    \sum\limits_{v \in C} \lambda(v)\cov(v) \leq \sfm \tag{V2}
\end{equation}
\end{lemma}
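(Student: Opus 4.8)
The plan is to prove this directly from the definition of $\CovP$ by interchanging the order of summation; this is the ``easy direction'' of a Farkas-type certificate and needs no appeal to LP duality. First I would unpack membership in the polytope: fix any $\cov \in \CovP$, and recall that $\CovP$ is the projection onto the $\cov$-coordinates of the polytope in $(\cov,z)$-space, so there exist nonnegative multipliers $\{z_S : S \in \cF\}$ with $\sum_{S \in \cF} z_S = 1$ (constraint~\eqref{eq:P3}) such that, for every $v \in C$, $\cov(v) = \sum_{S \in \cF : d(v,S) \leq 1} z_S$ by~\eqref{eq:P2}.

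Next I would substitute this expression for $\cov(v)$ into the left-hand side of~\eqref{prop2} and swap the two (finite) sums:
\[
\sum_{v \in C} \lambda(v)\, \cov(v) \;=\; \sum_{v \in C} \lambda(v) \sum_{\substack{S \in \cF : d(v,S) \leq 1}} z_S \;=\; \sum_{S \in \cF} z_S \sum_{\substack{v \in C : d(v,S) \leq 1}} \lambda(v).
\]
Then I would apply hypothesis~\eqref{prop1}: for each fixed $S \in \cF$ the inner sum $\sum_{v \in C : d(v,S) \leq 1} \lambda(v)$ is at most $\sfm$. Since each $z_S \geq 0$ (constraint~\eqref{eq:P4}), multiplying this scalar inequality by $z_S$ and summing over $S$ preserves the direction, yielding
\[
\sum_{S \in \cF} z_S \sum_{\substack{v \in C : d(v,S) \leq 1}} \lambda(v) \;\leq\; \sfm \sum_{S \in \cF} z_S \;=\; \sfm,
\]
where the last equality is again~\eqref{eq:P3}. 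Chaining the two displays gives~\eqref{prop2}.

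I do not expect a real obstacle here. The only points worth a sentence are that the sign of $\lambda(v)$ is irrelevant to the summation swap (all sums are finite), and that the one place where the inequality direction matters is the multiplication of~\eqref{prop1} by the \emph{nonnegative} weight $z_S$. In effect, all the content is in having defined $\CovP$ with the auxiliary variables $z_S$: once that is done,~\eqref{prop1} is precisely the condition guaranteeing that the reweighted inequality survives the projection down to the $\cov$-coordinates.
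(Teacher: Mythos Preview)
Your proposal is correct and follows essentially the same argument as the paper's proof: substitute \eqref{eq:P2} for $\cov(v)$, swap the order of summation, apply \eqref{prop1} termwise using $z_S \ge 0$ from \eqref{eq:P4}, and finish with \eqref{eq:P3}. The paper's proof is just this chain of equalities and one inequality, written without the surrounding commentary.
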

\begin{proof}
Given $\cov \in \CovP$, there exists $\{z_S: S\in \cF\}$ such that together they satisfy \eqref{eq:P1}-\eqref{eq:P4}.
\begin{align*}
\sum\limits_{v \in C} \lambda(v)\cov(v) &~~=_{\eqref{eq:P2}} ~~~ \sum\limits_{v \in C} \lambda(v)\sum\limits_{ \substack{S \in \cF:\\ d(v,S) \leq 1}} z_S  ~~= \sum\limits_{S \in \cF} z_S\sum\limits_{ \substack{v \in C:\\ d(v,S) \leq 1}} \lambda(v)\\
&~~\leq_{\eqref{prop1},\eqref{eq:P4}}  ~~~ \sfm \sum\limits_{S \in \cF} z_S =_{\eqref{eq:P3}} ~~~\sfm  
\end{align*}
\end{proof}

\noindent
The next lemma shows that all $\cov$'s in $\CovP$ are valuable.
\begin{lemma}
\label{hypLma}
Suppose an assignment $\{\cov(v) \in \R_{\geq 0} :v\in  C\}$ is not valuable with respect to $\cI= (F,C,d,\sfm,\cF)$.
Then there is a hyper-plane separating it from $\CovP$ that can be constructed in polynomial time.
\end{lemma}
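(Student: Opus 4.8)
The plan is to contrapositively unpack what it means for $\cov$ to fail to be valuable and then exhibit the separating hyperplane as an inequality of the form~\eqref{prop2} with a suitable choice of multipliers $\lambda$. First I would run Algorithm~\ref{constAlgo} on $\cI$ and $\cov$ to obtain the $\fpcm$ instance $\cJ = (F,\cF,\cP,\val)$, together with the representative set $\reps$ and the family $\{\chld(v): v\in\reps\}$. Since $\cov$ is not valuable, $\opt(\cJ) < \sfm$, and crucially this is certified efficiently: we call the $\fpcm$ solver (or rather, we only need the \emph{value}, which the round-and-cut driver will have on hand) to learn that no $S\in\cF$ meeting each part of $\cP$ at most once achieves $\val(S)\geq \sfm$. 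The goal is now to turn this combinatorial failure into a linear certificate against $\CovP$.

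The key idea is to define $\lambda$ by pushing the "value" of a customer back along the partition structure. Concretely, for each $v\in\reps$ I would set $\lambda(v) = 1$ (or, if needed, distribute weight $|\chld(v)|$ suitably — but the cleaner route is that, by the greedy ordering in Fact~\ref{greedyrule}, $\cov(u)\leq \cov(v)$ for all $u\in\chld(v)$), and set $\lambda(u) = 0$ for $u\notin\reps$. Then, on one hand, for any $S\in\cF$, the set of representatives whose ball $B_F(v,1)$ is hit by $S$ is exactly $R(S)$ from Definition~\ref{ISdef}; since $\cP$ is a sub-partition (Claim~\ref{subpartclm}), one can extract from $S$ a sub-solution $S'\subseteq S$ with $|S'\cap A|\le 1$ for every $A\in\cP$ and $R(S')=R(S)$, so $S'$ is feasible for $\cJ$ and $\val(S') = \sum_{v\in R(S)}|\chld(v)| \ge |R(S)|$ — wait, this needs the weighting by $|\chld(v)|$, so in fact the right choice is $\lambda(v) = |\chld(v)|$ for $v\in\reps$ and $0$ otherwise. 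With this choice, $\sum_{v\in C: d(v,S)\le 1}\lambda(v) \le \sum_{v\in R(S)}|\chld(v)| = \val(S') \le \opt(\cJ) < \sfm$, which is precisely property~\eqref{prop1} (in fact a strict version). Hence by Lemma~\ref{prehypLma}, every $\cov\in\CovP$ satisfies $\sum_{v}\lambda(v)\cov(v) \le \sfm$.

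It remains to show that the given (non-valuable) $\cov$ \emph{violates} this inequality, i.e. $\sum_{v}\lambda(v)\cov(v) > \sfm$, so that $\{\cov' : \sum_v \lambda(v)\cov'(v) \le \sfm\}$ is the desired separating hyperplane. Here is where Fact~\ref{greedyrule} does the work: since $\{\chld(v):v\in\reps\}$ partitions $C$ (Fact~\ref{partfact}) and $\cov(v)\ge\cov(u)$ for each $u\in\chld(v)$,
\[
\sum_{v\in C}\lambda(v)\cov(v) \;=\; \sum_{v\in\reps}|\chld(v)|\,\cov(v) \;\ge\; \sum_{v\in\reps}\sum_{u\in\chld(v)}\cov(u) \;=\; \sum_{u\in C}\cov(u)\;\ge\;\sfm,
\]
using constraint~\eqref{eq:P1} at the last step — but this only gives $\ge\sfm$, not $>\sfm$. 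To get strict separation one uses that $\opt(\cJ)\le \sfm-1$ (values are integral), so the valid inequality can be strengthened to $\sum_v\lambda(v)\cov'(v)\le \sfm-1$ for all $\cov'\in\CovP$, while the displayed chain shows $\sum_v\lambda(v)\cov(v)\ge\sfm>\sfm-1$. All quantities are computable in $\poly(|\cI|)$ time from the output of Algorithm~\ref{constAlgo}, giving the polynomial-time construction.

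The main obstacle I anticipate is getting the \emph{strict} inequality and the exact constant right: one must be careful that the valid inequality from Lemma~\ref{prehypLma} applies to the weighted $\lambda$ (not $\lambda\equiv 1$ on $\reps$), that integrality of $\val$ legitimately lets us replace $\sfm$ by $\sfm-1$ on the $\CovP$ side, and that the only consumer of the $\fpcm$ solver here is a single query for $\opt(\cJ)$. A secondary subtlety is the passage from an arbitrary $S\in\cF$ to a $\cP$-feasible $S'$ with $R(S')=R(S)$: this is immediate because $\cP$ is a sub-partition, so deleting down to one facility per part only shrinks $S$ within $\cF$ (down-closedness) without changing which representative balls are hit.
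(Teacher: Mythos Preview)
Your approach is essentially the paper's: set $\lambda(v)=|\chld(v)|$ on representatives and $0$ elsewhere, verify the validity condition~\eqref{prop1} via the sub-solution $S'\subseteq S$ feasible for $\cJ$ (so that $\sum_{v\in R(S)}|\chld(v)|=\val(S')\le\opt(\cJ)\le\sfm-1$ by integrality), and then use Fact~\ref{greedyrule} and Fact~\ref{partfact} to lower-bound $\sum_v\lambda(v)\cov(v)$ by $\sum_{u\in C}\cov(u)$. The paper makes one cosmetic change---it scales your $\lambda$ by $\alpha=\sfm/(\sfm-\tfrac12)$ so that the valid inequality and its violation both carry the constant $\sfm$ and match the exact statement of Lemma~\ref{prehypLma}---but your version with right-hand side $\sfm-1$ is equally valid once one notes that the proof of Lemma~\ref{prehypLma} goes through verbatim with any constant in place of $\sfm$.

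There is one genuine (though easily fixed) gap: in your displayed chain you invoke~\eqref{eq:P1} to conclude $\sum_{u\in C}\cov(u)\ge\sfm$, but the given $\cov$ is \emph{not} assumed to lie in $\CovP$---indeed the whole point is that it might not---so you cannot appeal to~\eqref{eq:P1} for it. The fix (and this is exactly what the paper does first) is a case split: if $\sum_{u\in C}\cov(u)<\sfm$, then~\eqref{eq:P1} itself is the separating hyperplane and you are done; otherwise $\sum_{u\in C}\cov(u)\ge\sfm$ holds and your chain goes through. A minor side remark: constructing the hyperplane needs no call to the $\fpcm$ solver---the multipliers $\lambda$ come directly from $\reps$ and $\chld$, and ``not valuable'' is a hypothesis of the lemma, not something to be certified inside it.
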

\begin{proof}
	If $\sum_{v\in C} \cov(v) < \sfm$, this inequality itself is a separating hyper-plane and we are done. So we may assume $\sum_{v\in C} \cov(v)\geq \sfm$.

	Let $\cJ = (F,\cF,\cP,\val)$ be the $\cF$-PCM instance constructed by Algorithm~\ref{constAlgo} from $\cI$ and $\cov$. Fix $S \in \cF$ and recall from Definition~\ref{ISdef} that $R(S) = \{v \in \reps : B_F(v,1) \cap S \neq \emptyset \}$. Pick an arbitrary $T\subseteq S$ for which $\lvert B_F(v,1) \cap T\rvert = 1$, for all $v \in R(S)$. Observe that by down-closedness of $\cF$, we have $T\in \cF$ which implies $T$ is a feasible solution for $\cJ$, and since $\cov$ is not valuable $\val(T) < \sfm$.
	Furthermore, Claim~\ref{valclm} applied to $T$ gives $\val(T) = \sum_{v \in R(T)} |\chld(v)|$.
	Since $R(S)=R(T)$ and $|\chld(v)|$ is integer-valued, we get:
	\begin{equation}\label{lowp}
	\sum_{v\in R(S)} \lvert \chld(v) \rvert \leq \sfm -1
	\end{equation}
	Let $\alpha = \frac{\sfm}{\sfm - 0.5} > 1$. Define $\lambda(v)$ for $v \in C$ as:
\[\lambda(v) = \begin{cases}
	\alpha \lvert\chld(v)\rvert & v \in \reps\\
	0 &  \textrm{for all other} ~v\in C
	\end{cases} \]
Now observe that for any $S\in \cF$:
	\begin{equation*}
	\sum\limits_{ \substack{v \in C: d(v,S) \leq 1}} \lambda(v) = \sum\limits_{ \substack{v \in \reps:d(v,S) \leq 1}}  \alpha \lvert\chld(v)\rvert = \alpha \sum\limits_{v \in R(S)} \lvert\chld(v)\rvert \leq \alpha(\sfm-1) < \sfm 
	\end{equation*}
That is, $\lambda(v)$'s satisfy~\eqref{prop1}.
	Now we prove \eqref{prop2} is not satisfied thus it can be used to separate $\cov$ from $\CovP$.
\begin{align}
\sum\limits_{v \in C} \lambda(v)\cov(v) & = & \alpha\sum\limits_{v \in \reps} \lvert\chld(v)\rvert\cov(v) & ~~= & \alpha\sum\limits_{v \in \reps}\sum\limits_{u \in \chld(v)}\cov(v) \notag \\
& \geq_{\textrm{Fact}~\ref{greedyrule}} & \alpha\sum\limits_{v \in \reps}\sum\limits_{u \in \chld(v)}\cov(u) & ~~=_{\textrm{Fact}~\ref{partfact}} & \alpha\sum\limits_{v \in C}\cov(v) \geq \alpha m > m \notag
\end{align}	
\end{proof}

\begin{proof}[{\bf Proof of Theorem~\ref{mainthrm}}]
Given the guess $\hopt$ which is scaled to $1$, we use the ellipsoid algorithm to check if $\CovP$ is empty or not.
Whenever ellipsoid asks if a given $\cov$ is in $\CovP$ or not, run Algorithm~\ref{constAlgo} for this given $\cov$ to construct the corresponding $\cF$-PCM instance $\cJ$ and use algorithm $\cA$, promised in the statement of Theorem~\ref{mainthrm}, to solve it. 
If $\opt(\cJ) \geq \sfm$, then Lemma~\ref{lem:summary} implies that we have a $3$-approximate solution. 
Otherwise, $\cov$ is not valuable, and we can use Lemma~\ref{hypLma} to find a separating hyperplane. 
In polynomial time, either we get a $\cov\in \CovP$ which by Lemma~\ref{hypLma} has to be valuable, or we prove $\CovP$ is empty and we modify our $\hopt$ guess. For the correct guess, the latter case won't occur and we get a $3$-approximate solution.
\end{proof}

\section{Applications and Extensions}\label{secApp}

In this section we elaborate on the applications and extensions stated in the Introduction.
We begin with looking at specific instances of $\cF$ which have been studied in the literature, and some which have not. \medskip

\noindent
{\bf Single and Multiple Knapsack Constraints.} We look at 
\[\cF_{\mathsf{KN}} := \{S\subseteq F: \text{ for $i=1,\ldots, d$}, ~\sum_{v \in S} w_i(v) \leq k_i\}\] 
where there are $d$ weight functions over $F$ and $k_i$'s are upper bounds on these weights.
Of special interest is the case $d=1$ in which we get the robust knapsack supplier problem also called the weighted $k$-supplier problem with outliers.

The $\fpcm$ problem for the above $\cF_{\mathsf{KN}}$ has the following complexity: When $d=1$, the problem can be solved in polynomial time. Indeed, given a partition $\cP$, since $\val(u) = \val(v)$ for all $v$ in the same part, any solution which picks a facility from a part $A \in \cP$ may as well pick the one with the smallest weight in that part. Thus, the problem boils down to the usual knapsack problem in which we have $|\cP|$ items where the item corresponding to part $A\in \cP$ has weight $\min_{v\in A} w(v)$ and value $\val(v)$.
Since the values are poly-bounded, this problem is solvable in polynomial time. Thus, we get the following corollary to Theorem~\ref{thm:1} resolving the open question raised in~\cite{CLLW13} and~\cite{HPST17}.
\begin{theorem}\label{thm:rkn}
	There is a polynomial time $3$-approximation to the robust knapsack center problem.
\end{theorem}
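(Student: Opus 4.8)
The plan is to invoke Theorem~\ref{thm:1} with the family $\cF = \cF_{\mathsf{KN}}$ in the case $d=1$; everything then reduces to exhibiting a polynomial-time algorithm $\cA$ for the associated $\fpcm$ problem whose values are bounded by $|C|$. So fix an instance $\cJ = (F,\cF_{\mathsf{KN}},\cP,\val)$ of $\fpcm$ with $|\val| \le |C|$. First I would pass from the sub-partition $\cP$ to an honest partition $\cQ$ of $F$ by adjoining a singleton $\{f\}$ for each facility $f$ lying in no part of $\cP$, exactly as in the proof of Theorem~\ref{thm:2}; a feasible solution to $\cJ$ is then the same thing as a set $S \in \cF_{\mathsf{KN}}$ that picks at most one facility from each part of $\cQ$, and this reformulation does not change $\opt(\cJ)$.

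Next comes the one observation that does any work: within a part $A \in \cQ$ all facilities carry the same $\val$, so if an optimal solution uses some facility of $A$ it may as well use the one of minimum weight in $A$, which only relaxes the constraint $\sum_{v \in S} w(v) \le k$. Hence $\opt(\cJ)$ equals the optimum of the following ordinary knapsack problem: there is one item per part $A \in \cQ$, of weight $w_A := \min_{v \in A} w(v)$ and profit $p_A$ equal to the common $\val$-value of the facilities in $A$; we want a sub-collection of items of total weight at most $k$ maximizing total profit, and from any such collection we recover a feasible $S$ for $\cJ$ of the same value by taking the minimum-weight facility of each chosen part.

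I would then solve this knapsack instance by the textbook dynamic program indexed by \emph{profit}: for each $P \in \{0,1,\ldots,\sum_A p_A\}$ compute the minimum total weight of an item set achieving profit exactly $P$, and output the largest $P$ with value at most $k$. Since each $p_A \le |\val| \le |C|$ and there are at most $|F|$ parts, the profit range is at most $|C|\cdot|F| = \poly(|\cJ|)$, so the DP runs in polynomial time; this is precisely where poly-boundedness of $\val$ is used. Feeding this $\cA$ into Theorem~\ref{thm:1} gives a $3$-approximation for the robust knapsack supplier problem in time $\poly(|\cI|)$, and in particular for the robust knapsack center problem (the special case $F = C = X$). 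There is essentially no hard step; the only thing to be careful about is that the DP must be carried out over profits, which are polynomially bounded, rather than over weights, which need not be — it is exactly the possibly-large weights that give the natural LP relaxation of the problem an unbounded integrality gap and force the round-and-cut approach in the first place.
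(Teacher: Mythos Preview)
Your proposal is correct and follows essentially the same approach as the paper: reduce to $\fpcm$ via Theorem~\ref{thm:1}, observe that within each part one may always choose the minimum-weight facility since values are constant on parts, and then solve the resulting knapsack instance with the profit-indexed dynamic program that runs in polynomial time because $\val$ is poly-bounded. Your extension of $\cP$ to a full partition $\cQ$ and the explicit remark that the DP must range over profits rather than (possibly large) weights are careful elaborations, but the underlying argument is identical to the paper's.
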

When $d > 1$, then the $\fpcm$ problem is NP-hard even when $\val$ is poly-bounded. However, if the $w_i$'s are also poly-bounded (actually one of them can be general), then the $\fpcm$ problem can be solved in polynomial time using dynamic programming. This problem was in fact studied in ~\cite{HS86} (the conference version) and is called the {\em suitcase} problem there.
Thus, we get the following corollary to Theorem~\ref{thm:1} extending the result in~\cite{HS86}.
\begin{theorem}\label{thm:rmkn}
	There is a polynomial time $3$-approximation to the robust multiple-knapsack center problem if the number of weights is a constant and all but possibly one weight function are poly-bounded.
\end{theorem}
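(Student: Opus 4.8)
The plan is to invoke Theorem~\ref{thm:1}: it suffices to solve the $\fpcm$ problem in polynomial time when $\cF = \cF_{\mathsf{KN}}$, $d$ is constant, and, after relabeling, $w_2,\ldots,w_d$ are poly-bounded while $w_1$ may be arbitrary. So fix an $\fpcm$ instance $\cJ = (F,\cF_{\mathsf{KN}},\cP,\val)$ with $|\val|\le |C|$. As in the proof of Theorem~\ref{thm:2}, I would first extend $\cP$ to a partition $\cQ$ of $F$ by throwing each facility lying in no part of $\cP$ into its own singleton part; singleton parts impose no constraint, so this does not change the feasible region. Since $\val$ is constant on each part $A\in\cQ$ --- write $n_A$ for that value --- a feasible solution amounts to choosing, independently for each $A\in\cQ$, either nothing or a single facility $f\in A$; choosing $f$ contributes value $n_A$ and weight $w_i(f)$ in coordinate $i$, and we must keep each coordinate's total at most $k_i$.

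Next I would run the ``suitcase''-style dynamic program of~\cite{HS86}. List the parts as $A_1,\ldots,A_{|\cQ|}$; the DP is indexed by a prefix length $j$, a vector $(b_2,\ldots,b_d)$ of accumulated weights in the poly-bounded coordinates, and a total value $t$, and it stores the minimum achievable accumulated coordinate-$1$ weight among all selections from $A_1,\ldots,A_j$ attaining exactly $(b_2,\ldots,b_d)$ and value $t$ (or $+\infty$ if impossible). The transition at $A_j$ is either to skip $A_j$, or, for some $f\in A_j$, to add $n_{A_j}$ to $t$, add $w_i(f)$ to $b_i$ for $i\ge 2$, and add $w_1(f)$ to the stored coordinate-$1$ weight, keeping the minimum over collisions. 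Then $\opt(\cJ)$ is the largest $t$ admitting an entry $\mathrm{DP}[|\cQ|][b_2]\cdots[b_d][t]\le k_1$ with every $b_i\le k_i$, and the optimal $S$ is recovered by backtracking; by construction $S\in\cF_{\mathsf{KN}}$ and $|S\cap A|\le 1$ for all $A\in\cP$.

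For the running time: each $b_i$ ranges over $\{0,\ldots,\sum_{f\in F}w_i(f)\}$, of polynomial size since $w_i$ is poly-bounded and $|F|$ is polynomial; $t$ ranges over $\{0,\ldots,|\cQ|\cdot|\val|\}$, polynomial since $|\val|\le|C|$; and as $d$ is constant the product of these ranges is still polynomial, with $O(|F|)$ work per cell. Feeding this algorithm as $\cA$ into Theorem~\ref{thm:1} gives the claimed polynomial-time $3$-approximation.

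The step to watch is exactly this polynomiality of the state space: it hinges on $\val$ being poly-bounded (so value is a polynomial-range index rather than an exponential one), on all but one weight being poly-bounded (so those coordinates can be tracked explicitly), on the one remaining weight being absorbed by the ``minimum weight for a fixed value'' inversion, and on $d=O(1)$ so the ranges multiply to a polynomial. If two or more weights were arbitrary, none of this works --- and indeed $\fpcm$ is then NP-hard, as noted after Theorem~\ref{thm:rkn}, so by Theorem~\ref{thm:2} a uni-criteria reduction is impossible, consistent with Table~\ref{tbl:1}.
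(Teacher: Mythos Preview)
Your proposal is correct and follows exactly the approach the paper takes: reduce via Theorem~\ref{thm:1} to the $\fpcm$ problem and solve the latter by the ``suitcase'' dynamic program of~\cite{HS86}, tracking the poly-bounded weight coordinates and the (poly-bounded) value explicitly while minimizing the one unbounded weight. The paper merely cites~\cite{HS86} for the DP rather than spelling it out, so your writeup is in fact more detailed than the paper's own argument.
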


\noindent
{\bf Single and Multiple Matroid Constraints.} 
We look at 
\[\cF_{\mathsf{Mat}} := \{S\subseteq F: S\in \cI_{M_i}, ~\forall i=1,\ldots,d\}\] 
When $d=1$, we get the robust matroid center problem. The $\fpcm$ paper reduces to finding a maximum value set in $\cI_M$ and a partition matroid induced by $\cP$. This is solvable in polynomial time even when $\val$ is general and not poly-bounded, and even when $\cI_M$ is given as an independent set oracle.
Thus, we get the following corollary to 
Theorem~\ref{thm:1} obtaining the result in~\cite{HPST17}.
\begin{theorem}\label{thm:rmc}[Theorem 1.1 in~\cite{HPST17}]
	There is a polynomial time $3$-approximation to the robust matroid center problem even when the matroid is described as an independent set oracle.
\end{theorem}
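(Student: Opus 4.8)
The plan is to obtain this as an immediate corollary of Theorem~\ref{thm:1}: it suffices to exhibit a polynomial-time algorithm $\cA$ for the $\fpcm$ problem in the case $\cF = \cI_M$ for a single matroid $M$ on ground set $F$ (the $d=1$ case of $\cF_{\mathsf{Mat}}$). So, given an $\fpcm$ instance $\cJ = (F,\cF,\cP,\val)$, I would first recast the partition constraint as a matroid constraint. As in the proof of Theorem~\ref{thm:2}, extend $\cP$ to a full partition $\cQ$ of $F$ by adjoining a singleton $\{f\}$ for every $f$ lying in no part of $\cP$, and let $M_\cP$ be the partition matroid on $F$ in which $S$ is independent iff $|S \cap A| \le 1$ for every $A \in \cQ$. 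Then the feasible solutions of $\cJ$ are exactly the common independent sets of $M$ and $M_\cP$, and $\opt(\cJ)$ equals the maximum $\val$-weight of such a common independent set (using that $\cF$ is down-closed, so there is no lower bound on $|S|$).

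Hence solving this $\fpcm$ instance is precisely an instance of weighted matroid intersection, which is solvable in polynomial time by Edmonds' algorithm, and crucially that algorithm accesses the two matroids only through independence oracles. The oracle for $M_\cP$ is trivial to implement from $\cP$, and the oracle for $M$ is exactly what the theorem statement assumes is available — so we never need any explicit representation of $M$ beyond this oracle, and we do not even need $\val$ to be poly-bounded here. This gives an algorithm $\cA$ with $T_\cA(\cJ) = \poly(|\cJ|)$, and since in our reduction $|\cJ| = \poly(|\cI|)$, Theorem~\ref{thm:1} yields a $3$-approximation for the robust matroid supplier (hence center) problem running in time $\poly(|\cI|) \cdot T_\cA(\cJ) = \poly(|\cI|)$.

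There is no real obstacle in this argument; the only points needing (minor) care are (i) forming the partition matroid on \emph{all} of $F$, so that facilities outside $\bigcup\cP$ stay freely selectable rather than being forbidden or treated as loops, and (ii) invoking the standard fact that weighted matroid intersection is oracle-polynomial, so that the composed running time from Theorem~\ref{thm:1} is genuinely polynomial. With these in place the statement follows, recovering Theorem~1.1 of~\cite{HPST17}.
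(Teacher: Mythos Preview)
Your proposal is correct and follows essentially the same route as the paper: reduce the $\fpcm$ instance to weighted matroid intersection between $M$ and the partition matroid induced by $\cP$, note that this is solvable in oracle-polynomial time (even without $\val$ being poly-bounded), and invoke Theorem~\ref{thm:1}. The paper's discussion preceding the theorem is terser but identical in content; your explicit extension of $\cP$ to a full partition $\cQ$ and your remark that elements outside $\bigcup\cP$ must remain selectable are correct elaborations of details the paper leaves implicit.
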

When there are $d>1$ matroids, then the $\fpcm$ problem is NP-hard. Therefore, Theorem~\ref{thm:2} implies that for instance, we can have {\em no} unicriteria approximation for the robust matroid-intersection center problem. \medskip

\noindent
{\bf Single Knapsack and Single Matroid Constraint.}
We look at 
\[\cF_{\mathsf{KN}\cap\mathsf{Mat}} := \{S\subseteq F: \sum_{v\in S} w(v) \leq k, ~~ S\in \cI_{M}\}\] 
which is the intersection of a single matroid and a single knapsack constraint. To the best of our knowledge, the resulting {\rfc} problem has not been studied before. One natural instantiation is when $F$ is a collection of high-dimensional vectors with weights and the constraint on the centers is to pick a linearly independent set with total weight at most $k$.

The corresponding $\fpcm$ problem asks us, given a partition $\cP$ and poly-bounded values $\val$, to find a set $S\in \cI_\cM \cap \cI_{\cP}$ of maximum value such that $w(S)\leq k$, where $\cI_{\cP}$ is the partition matroid induced by $\cP$. We don't know if this problem can be solved in polynomial time, even in the case when $M$ is another partition matroid.

\def\bw{\mathbf{w}}
\def\bW{\mathbf{W}}

However, the above problem is related to the {\em exact matroid intersection} problem. In this problem, we are given two matroids $\cM$ and $\cP$, and a weight function $\bw$ on each ground element and a budget $\bW$. The objective is to decide whether or not there is a set $S \in \cI_\cM \cap \cI_P$ such that $\bw(S) = \bW$. Understanding the complexity of this problem is a long standing challenge~\cite{Cam92,MVV87,PY82}. When the matroids are representable over the same field, then ~\cite{Cam92} gives a randomized pseudopolynomial time algorithm for the problem. The following claim shows the relation between $\fpcm$ and the exact matroid intersection problem; this claim is essentially present in ~\cite{BerBFS08}.

\begin{claim}\label{clm:ipco}
Given an algorithm for the exact matroid intersection problem, one can solve the $\fpcm$ problem in polynomial time when the weights $w$ are poly-bounded.
\end{claim}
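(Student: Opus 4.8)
The plan is to solve the $\fpcm$ instance $\cJ$ --- which here asks for a set $S$ of maximum value $\val(S)$ that is independent in $\cM$, independent in the partition matroid $\cI_\cP$ induced by $\cP$, and satisfies the budget $w(S)\le k$ --- using polynomially many calls to the assumed exact-matroid-intersection algorithm on the matroid pair $(\cM,\cI_\cP)$. (Here we first complete the sub-partition $\cP$ to a partition of $F$ by adding singleton parts, so that $\cI_\cP$ is a genuine matroid under which the facilities outside $\cP$ are unconstrained.) Two features separate $\cJ$ from a plain exact-matroid-intersection query: it is a \emph{maximization} over $\val$ rather than a feasibility test, and it involves the \emph{inequality} $w(S)\le k$ rather than an exact target weight. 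I would dispose of both by exhaustive guessing. Since $\val$ is poly-bounded, $\opt(\cJ)$ lies in $\{0,1,\ldots,\val(F)\}$, a set of polynomial size; since $w$ is poly-bounded, the weight $w(S)$ of an optimal $S$ lies in $\{0,1,\ldots,\min(k,w(F))\}$, again only polynomially many values. Hence it suffices to decide, for every pair $(V,W)$ with $W\le k$, whether there is some $S\in\cI_\cM\cap\cI_\cP$ with $\val(S)=V$ and $w(S)=W$; the largest $V$ admitting a feasible pair equals $\opt(\cJ)$.

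For a fixed pair $(V,W)$ I would fold the two equality requirements $\val(S)=V$ and $w(S)=W$ into the single equality that exact matroid intersection expects, using a mixed-radix encoding. Put $M:=1+w(F)$, define $\bw(f):=M\cdot\val(f)+w(f)$ for each $f\in F$, and set $\bW:=M\cdot V+W$. For any $S\subseteq F$ we have $\bw(S)=M\cdot\val(S)+w(S)$ with $0\le w(S)\le w(F)<M$, so $\bw(S)$ determines $\val(S)$ and $w(S)$ separately; in particular $\bw(S)=\bW$ if and only if $\val(S)=V$ and $w(S)=W$. Thus the decision for $(V,W)$ is precisely an exact-matroid-intersection instance on $(\cM,\cI_\cP)$ with weight function $\bw$ and target $\bW$. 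Crucially, because $M$, $\val$, and $w$ are all polynomially bounded, so are $\bw$ and $\bW$; every call is therefore to an instance of polynomial size with polynomially bounded weights, and the overall procedure makes only polynomially many calls, yielding a polynomial-time computation of $\opt(\cJ)$.

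Finally I would return an optimal set, not just its value. If the assumed algorithm for exact matroid intersection produces a witness common independent set --- the usual form of such algorithms --- one simply runs it on $(\cM,\cI_\cP,\bw,\bW)$ for the optimal pair $(V^\ast,W^\ast)$. Otherwise I would invoke the standard self-reduction: scan $F$ once, maintaining a committed partial solution, and to decide whether to keep the current element $f$, contract $f$ together with everything committed so far in both matroids and ask the oracle for a common independent set of the resulting minor achieving the residual target $\bW-\bw(\text{committed}\cup\{f\})$; keep $f$ exactly when the answer is yes. Contraction leaves $\cI_\cP$ a partition matroid and $\cM$ a minor of $\cM$, and does not change the (still polynomially bounded) weights, so the bookkeeping overhead stays polynomial. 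The one step that needs care --- and, I expect, the only place with genuine friction rather than routine calculation --- is precisely this last one: one must ensure the assumed ``exact matroid intersection algorithm'' is robust enough to be applied to matroid minors, or equivalently that it can be made to output a witness. The weight-encoding argument itself is entirely routine.
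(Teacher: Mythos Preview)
Your proposal is correct and essentially identical to the paper's proof: both guess the pair $(V,W)$ (the paper writes $(\sfv^*,k^*)$), both set the scaling factor to $w(F)+1$ (the paper writes $\phi$, you write $M$), and both use the mixed-radix encoding $\bw(f)=M\cdot\val(f)+w(f)$ to collapse the two equalities into one exact-matroid-intersection query. The only substantive addition in your write-up is the discussion of recovering a witness set via self-reduction/contraction; the paper's proof stops at deciding the correct value and silently assumes the exact-matroid-intersection subroutine returns a set, so your extra paragraph fills a small gap the paper leaves implicit.
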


\begin{proof}
We guess $\sfv^*$ to be the optimum value of the $\fpcm$ problem; since $\val$ is poly-bounded, there are only polynomially many guesses.
We also guess $k^* \leq k$ to be the total $w$ of the optimum set. Again if $w$ is poly-bounded, there are polynomially many guesses.
We define a weight function $\bw$ as follows. Let $\phi = w(F) + 1$ be a large enough upper-bound on the possible values of $w(S), S\subseteq F$. Define $\bw(f) = \phi \val(f) + w(f)$ for all $f\in F$ and $\bW = \phi \sfv^* + k^*$.

 We claim that there is a set $S$ in $\cI_M \cap \cI_{\cP}$ with $\bw(S) = \bW$ iff $\val(S) = \sfv^*$ and $w(S) = k^*$.
 The if-direction is trivial. 

 On the other hand if $\bw(S) = \bW$ we get
 $k^* = \bw(S) - \phi \sfv^* = \phi \val(S) + w(S) - \phi V^*$. Now if $\val(S) \neq \sfv^*$ since $\val$ is integer-valued and since $\phi > w(S)$ for any $S\subseteq F$, 
 the RHS is either negative or $> w(F)$. In any case it cannot be $k^*$. Therefore, we must have $\val(S) = \sfv^*$ which implies $w(S) = k^*$.
\end{proof}

Armed with the non-trivial result about exact matroid intersection from~\cite{Cam92}, we get the following.

\begin{theorem}\label{thm:knandm}
	Given a linear matroid $\cM$ and a poly-bounded weight function, there is a 
randomized polynomial time $3$-approximation to the robust knapsack-and-matroid center problem.
\end{theorem}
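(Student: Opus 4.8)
The plan is to instantiate Theorem~\ref{thm:1} with $\cF = \cF_{\mathsf{KN}\cap\mathsf{Mat}}$ and to supply the algorithm $\cA$ it requires for the $\fpcm$ problem by composing Claim~\ref{clm:ipco} with the exact matroid intersection algorithm of~\cite{Cam92}. Since the robust knapsack-and-matroid center problem is the special case $F = C$ of the {\rfc} problem, it suffices to exhibit a randomized polynomial-time $3$-approximation in the (more general) supplier setting, which is exactly what Theorem~\ref{thm:1} produces once $\cA$ is in hand.

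First I would check that the hypotheses needed to invoke $\cA$ are met. Every $\fpcm$ instance $\cJ = (F,\cF,\cP,\val)$ that arises inside the proof of Theorem~\ref{thm:1} (built by Algorithm~\ref{alg:1}) has $|\val| \leq |C|$, so $\val$ is integer-valued and poly-bounded; and the knapsack weight function $w$ is poly-bounded by assumption. A feasible solution of $\cJ$ is precisely a set $S$ that is independent in $\cM$, independent in the partition matroid $\cI_\cP$ induced by $\cP$, and has $w(S)\leq k$, and we want one of maximum $\val$. By Claim~\ref{clm:ipco} it therefore suffices to solve the exact matroid intersection problem on the pair $(\cM,\cI_\cP)$ with weights $\bw(f)=\phi\,\val(f)+w(f)$ and target $\bW=\phi\,\sfv^*+k^*$; here $\phi=w(F)+1$, $\sfv^*$, and $k^*$ are all poly-bounded, so $\bW$ and every $\bw(f)$ are poly-bounded.

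Next I would verify that this exact matroid intersection instance falls under~\cite{Cam92}, i.e.\ that $\cM$ and $\cI_\cP$ are representable over a common field. The matroid $\cM$ is linear by hypothesis, say over a field $\mathbb{F}$. A partition matroid is representable over \emph{every} field: assign to each $f$ lying in part $A_j$ of $\cP$ the standard basis vector $e_j$, so a set is linearly independent exactly when it meets each part at most once. In particular $\cI_\cP$ is representable over $\mathbb{F}$, both matroids live over the same field, and~\cite{Cam92} applies, giving a randomized algorithm running in time polynomial in $|F|$ and in $\bW$ — hence polynomial in $|\cJ|$ since $\bW$ is poly-bounded. (A decision oracle for exact matroid intersection is upgraded to one that also outputs a witness set by the usual self-reducibility: delete or contract elements one at a time, keeping the target feasible.)

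Combining these, $\cA$ solves in randomized polynomial time every $\fpcm$ instance that Theorem~\ref{thm:1} feeds it, and plugging this $\cA$ into Theorem~\ref{thm:1} yields the claimed randomized polynomial-time $3$-approximation. The one spot that needs genuine care is the common-field requirement of~\cite{Cam92}: one must observe that the auxiliary matroid produced inside Claim~\ref{clm:ipco} is always a \emph{partition} matroid, hence representable over the field of $\cM$ with no field extension and no blow-up in the representation, so the reduction never leaves the regime where~\cite{Cam92} runs. Everything else is a straightforward composition of results already established in the excerpt.
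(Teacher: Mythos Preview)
Your proposal is correct and follows exactly the route the paper intends: the paper does not give a separate proof of Theorem~\ref{thm:knandm} but states it as an immediate consequence of Theorem~\ref{thm:1}, Claim~\ref{clm:ipco}, and the exact matroid intersection result of~\cite{Cam92}, which is precisely the composition you carry out. Your added observations (that the partition matroid $\cI_\cP$ is representable over every field, and hence over the field of $\cM$; that $\bw$ and $\bW$ stay poly-bounded; and the self-reducibility remark) merely fill in details the paper leaves implicit.
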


\subsection{The Case of No Outliers}\label{subsecapx}
The $\cF$-supplier problem, that is the case of $\sfm = |C|$, may be of special interest. In this case the problem is easier and the complexity is defined by the complexity of the following decision problem.
\begin{definition}[{$\fpcf$} problem]
	The input is $\cJ = (F,\cF,\cP)$ where $F$ is a finite set, $\cF \subseteq 2^F$ is a down-closed family and  $\cP \subseteq 2^F$ is an arbitrary sub-partition of $F$. The objective is to decide whether there {\em exists} a set $S\in \cF$ such that
	$\lvert S \cap A \rvert = 1, ~~\forall A \in \cP $.
\end{definition}
\begin{theorem}\label{thm:1b}
If the $\fpcf$ problem can be solved efficiently for any partition $\cP$, then 
the $\cF$-supplier problem has a polynomial time $3$-approximation. Otherwise, there is no non-trivial approximation possible
for the $\cF$-supplier problem.
\end{theorem}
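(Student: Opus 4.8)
The plan is to mirror the round-and-cut development of Theorem~\ref{thm:1}, but in the much simpler setting $\sfm = |C|$, where the coverage variables $\cov(v)$ become trivial (they must all equal $1$), so the ellipsoid/separation machinery collapses and a direct argument suffices. For the positive direction, I would guess $\hopt = 1$ by scaling and run Algorithm~\ref{constAlgo} with the constant assignment $\cov(v) = 1$ for all $v\in C$. This produces a sub-partition $\cP$ of $F$ and the set $\reps$ of representatives, with $\{\chld(v): v\in\reps\}$ partitioning $C$ (Fact~\ref{partfact}). The key observation is that when there are no outliers, the optimal solution $S^*\in\cF$ (assuming $\opt(\cI)\le 1$) must serve \emph{every} customer, in particular every $v\in\reps$; since $d(v,S^*)\le 1$, we get $B_F(v,1)\cap S^*\ne\emptyset$ for each $v\in\reps$, i.e.\ $R(S^*) = \reps$. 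Because $\cP$ is a sub-partition (Claim~\ref{subpartclm}), we may pick exactly one facility of $S^*$ from each $B_F(v,1)$, and by down-closedness of $\cF$ this subset $T$ lies in $\cF$ and satisfies $|T\cap A| = 1$ for every $A\in\cP$. Hence the $\fpcf$ instance $(F,\cF,\cP)$ is a \textbf{yes}-instance. Running the $\fpcf$ solver returns some such $S\in\cF$; then $R(S)\supseteq \reps$ (it intersects every part of $\cP$), so by Claim~\ref{bll3clm} every $u\in\chld(v)$ lies in $B_C(f,3)$ for the facility $f\in B_F(v,1)\cap S$, and since the $\chld(v)$'s cover all of $C$, we conclude $\max_{v\in C} d(v,S)\le 3$. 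If the $\fpcf$ solver reports \textbf{no}, we certify $\opt(\cI) > 1$ and increase the guess; for the correct guess this cannot happen.

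For the negative direction, I would reuse essentially the reduction in the proof of Theorem~\ref{thm:2}, specialized and simplified. Given an $\fpcf$ instance $\cJ = (F,\cF,\cP)$, build an {\rfc} instance with facility set $F$; extend $\cP$ to a partition $\cQ$ of $F$ by adding singletons; for each part $A\in\cQ$ add one customer $c_A$ with $d(f,c_A) = 1$ for $f\in A$ and all cross distances $\infty$ (this is a metric). Set $\sfm = |C| = |\cQ|$, so there are no outliers. If $\cJ$ is a \textbf{yes}-instance with witness $S$, then $S$ has a facility in every part of $\cP$, and in every singleton part trivially a facility is available within distance $1$ --- wait, that is not automatic, so I would instead only require $S$ to hit the parts of $\cP$ and observe the singleton customers can always be covered by opening their unique facility; more cleanly, I would add the singleton facilities freely to $S$ (down-closedness is the wrong direction here, so I must be careful). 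The fix: make $\cF' := \{S\cup Z : S\in\cF,\ Z\subseteq F\setminus\bigcup\cP\}$, which is still down-closed and has the same $\fpcf$ answer w.r.t.\ $\cP$; then a \textbf{yes}-witness covers all of $C$ at distance $1$. Conversely, a non-trivial (say $\alpha<\infty$) approximate solution $S$ must have $d(c_A,S)\le\alpha<\infty$ for every $A\in\cQ$, which forces $S$ to contain a facility of every part $A$, so $S$ restricted to $\bigcup\cP$ hits each part at least once; trimming to at most one per part (down-closedness) gives an exact-one witness, so $\cJ$ is \textbf{yes}. This is a polynomial-time reduction, so a non-trivial approximation for {\rfc} in the zero-outlier case yields a polynomial-time $\fpcf$ solver.

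The main obstacle I anticipate is the bookkeeping around singleton parts in the hardness reduction: the $\fpcf$ definition demands \emph{exactly} one element per part of $\cP$, but the supplier instance's approximate solution only guarantees \emph{at least} one per relevant part, and it may also pick facilities in parts of $\cQ\setminus\cP$ or none at all in singletons. Resolving this needs the small closure trick above (augmenting $\cF$ by arbitrary subsets of the ``free'' facilities) so that covering every singleton customer costs nothing and the only binding requirement is hitting each part of $\cP$; then down-closedness trims multiplicities down to exactly one. Everything else is a direct transcription of the claims already proved for Theorem~\ref{thm:1} with $\cov\equiv 1$, so I expect the write-up to be short.
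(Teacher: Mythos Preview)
Your positive direction is correct and essentially identical to the paper's sketch: run Algorithm~\ref{constAlgo} (the $\cov$ values are irrelevant here), observe that any optimal $S^*$ must intersect every part $B_F(v,1)$ because every representative $v\in\reps$ must be served, and conversely any transversal of $\cP$ that lies in $\cF$ is a $3$-approximate solution by Claim~\ref{bll3clm} and Fact~\ref{partfact}. The paper's own proof is only a sketch of this direction and does not spell out the hardness side, so your write-up is actually more complete in intent.

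However, the fix you propose for the hardness reduction has a genuine gap. The theorem is stated for a \emph{fixed} family $\cF$: the contrapositive you need is ``a non-trivial approximation for $\cF$-supplier yields an efficient $\fpcf$ solver for $(F,\cF,\cdot)$.'' Your augmented family $\cF' := \{S\cup Z : S\in\cF,\ Z\subseteq F\setminus\bigcup\cP\}$ is a different family, so the supplier instance you build lives in the $\cF'$-supplier problem, and the assumed approximation algorithm for $\cF$-supplier cannot be invoked on it. Showing that $\fpcf$ has the same answer over $\cF$ and $\cF'$ does not help, because you never get to run the black box on the $\cF'$ instance.

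The fix is much simpler than the one you attempted: do not add customers for the singleton parts at all. Create one customer $c_A$ only for each $A\in\cP$, with $d(f,c_A)=1$ for $f\in A$ and all other distances $\infty$; keep the family $\cF$ unchanged and set $\sfm=|C|=|\cP|$. This is precisely the specialization of the Theorem~\ref{thm:2} construction with $\val\equiv 1$ on $\bigcup\cP$ and $\val\equiv 0$ elsewhere (so $n_A=0$ for the singletons). Then $\opt(\cI)<\infty$ forces $S\cap A\neq\emptyset$ for every $A\in\cP$, and down-closedness lets you trim to exactly one per part; conversely a yes-witness serves all customers at distance $1$. No modification of $\cF$ is needed, and the reduction goes through directly.
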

\begin{proof}[Sketch]
	Run Algorithm~\ref{constAlgo} with an arbitrary assignment $\cov$ (and ignore the $\val$'s). Let $\cJ = (F,\cF,\cP)$ be the resulting
	$\fpcf$ instance. If the guess $\hopt=1$ is correct, then note that the optimum solution $S^*$ must satisfy $S^*\cap A \neq \emptyset$
	for all $A\in \cP$; if not, then the corresponding $v\in \reps$ can't be served. Conversely, any $S$ satisfying $S\cap A\neq\emptyset$ for all $A\in \cP$ implies a $3$-approximate solution.
	Therefore, an algorithm for $\fpcf$ can either give a $3$-approximate solution or prove the guess $\hopt$ is too low.
\end{proof}

Theorem~\ref{thm:1} and Theorem~\ref{thm:1b} raise the question: is there any set of constraints for which the problem without outliers is significantly easier than the problem with outliers? We don't know the answer to this question, although we guess the answer is yes.
For this, it suffices to design a set system for which $\fpcf$ is easy but $\fpcm$ is hard (perhaps NP-hard). To see the difference between these problems consider the $\cF_{\mathsf{KN}\cap\mathsf{Mat}}$ family described in the previous subsection. We don't know if $\fpcm$ is easy or hard, but $\fpcf$ is easy: this amounts to minimizing $w(S)$ over $S\in \cI_\cM \cap \cB_\cP$ where $\cB_\cP$ is the base polytope induced by $\cP$. This can be done in polynomial time, and therefore we get the following corollary.

\begin{theorem}\label{thm:knandmcenter}
	There is a polynomial time $3$-approximation to the knapsack-and-matroid center problem.
\end{theorem}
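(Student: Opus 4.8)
By Theorem~\ref{thm:1b} (instantiated with $F=C$, which specializes the $\cF$-supplier problem to the $\cF$-center problem), it suffices to exhibit a polynomial-time algorithm for the $\fpcf$ problem when $\cF = \cF_{\mathsf{KN}\cap\mathsf{Mat}} = \{S\subseteq F : w(S)\le k,\ S\in \cI_M\}$. So fix an $\fpcf$ instance $\cJ = (F,\cF_{\mathsf{KN}\cap\mathsf{Mat}},\cP)$; we must decide whether there is $S\in\cF_{\mathsf{KN}\cap\mathsf{Mat}}$ with $|S\cap A| = 1$ for every $A\in\cP$. The plan is to recognize this as a min-weight matroid intersection question.

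First I would reduce to the ground set $F' := \bigcup_{A\in\cP} A$: an element $f\in F\setminus F'$ lies in no part of $\cP$, so deleting it from a candidate $S$ preserves all the constraints $|S\cap A| = 1$, and by down-closedness of $\cF$ together with $w\ge 0$ it keeps $S$ in $\cF_{\mathsf{KN}\cap\mathsf{Mat}}$; hence we may assume $S\subseteq F'$. On $F'$, the condition ``$|S\cap A| = 1$ for all $A\in\cP$'' says exactly that $S$ is a basis of the partition matroid $\cM_\cP$ with $\cI_\cP = \{T\subseteq F' : |T\cap A|\le 1\ \forall A\in\cP\}$. Therefore $\cJ$ is feasible if and only if there is a set $S$ that is (i) independent in $M$, (ii) a basis of $\cM_\cP$ (equivalently, a common independent set of $M|_{F'}$ and $\cM_\cP$ of cardinality $|\cP|$, the rank of $\cM_\cP$), and (iii) of weight $w(S)\le k$.

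To test this I would run weighted matroid intersection on $M|_{F'}$ and $\cM_\cP$. Compute the maximum cardinality $r$ of a common independent set; if $r < |\cP|$, no basis of $\cM_\cP$ is independent in $M$, so output ``infeasible''. Otherwise $r = |\cP|$, and weighted matroid intersection returns in polynomial time a common independent set $S^*$ of cardinality $|\cP|$ of minimum weight (negate $w$ and take the maximum-weight common independent set of that cardinality, or optimize the linear functional $w$ over $\cB_\cP$ intersected with the matroid polytope of $M$). Output ``feasible'' iff $w(S^*)\le k$: if so, $S^*\in\cF_{\mathsf{KN}\cap\mathsf{Mat}}$ is the desired set, and otherwise every basis of $\cM_\cP$ independent in $M$ has weight $>k$. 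All steps use only an independence oracle for $M$ and run in polynomial time, so Theorem~\ref{thm:1b} delivers the claimed polynomial-time $3$-approximation.

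I do not expect a substantive obstacle here; the only points requiring a little care are the restriction to $F'$ (so that ``exactly one per part'' becomes ``basis of a partition matroid''), the use of $\cM_\cP$'s rank to decide when a common independent set can be a partition-matroid basis, and citing the standard fact that weighted matroid intersection solves the minimum-weight common-independent-set-of-prescribed-cardinality problem in oracle-polynomial time. (If one prefers the polyhedral route, the same conclusion follows from optimizing $w$ over the integral polytope $\cI_M \cap \cB_\cP$ given by the matroid-intersection theorem.)
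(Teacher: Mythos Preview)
Your proposal is correct and follows essentially the same approach as the paper: invoke Theorem~\ref{thm:1b} and solve the resulting $\fpcf$ instance by minimizing $w(S)$ over sets $S$ that are independent in $M$ and bases of the partition matroid induced by $\cP$, via weighted matroid intersection. The paper phrases this as optimizing $w$ over $\cI_M \cap \cB_\cP$ without the explicit restriction to $F'$ or the cardinality check, but the argument is the same.
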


\subsection{Handling Approximation}
The technique used to prove Theorem~\ref{thm:1} is robust enough to translate approximation algorithms for the $\fpcm$ problem to
{\em bi-criteria} approximation algorithms for the {\rfc} problem. There are two notions of approximation algorithms for the $\fpcm$ problem
and they lead to two notions of bi-criteria approximation.

The first is the standard notion: a $\rho$-approximation (for $\rho \le 1$) algorithm that takes instance $\cJ$ of $\fpcm$, returns a solution 
$S\in \cF$ of value $\val(S) \ge \rho \cdot \opt(\cJ)$. The corresponding bi-criteria approximation notion for the {\rfc} problem is the following: an $(\alpha,\beta)$-approximation algorithm for instance $\cI$ of {\rfc} returns a solution which opens centers at $S\in \cF$
and the distance of at least $\beta \sfm$ customers to $S$ is $\leq \alpha \cdot \opt(\cI)$. The proof of Theorem~\ref{thm:1} in fact implies the following.
\begin{theorem}
	\label{thm:1c}
	Let $\cA$ be a polynomial time $\rho$-approximate algorithm for the $\fpcm$ problem. Then there is a
	 polynomial time $(3,\rho)$-bi-criteria approximation algorithm for the {\rfc} problem. 
\end{theorem}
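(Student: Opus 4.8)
The plan is to re-run the argument of Theorem~\ref{thm:1} essentially verbatim, tracking the single place where the exact solver $\cA$ for $\fpcm$ was invoked and noting that a $\rho$-approximate solver there costs a factor $\rho$ on the coverage side, which is exactly the $\beta = \rho$ loss in the bi-criteria guarantee. First I would fix the guess $\hopt = 1$ (scaled) and set up the polytope $\CovP$ exactly as before. As in the proof of Theorem~\ref{mainthrm}, I run the ellipsoid method on $\CovP$; each time ellipsoid queries a candidate $\cov$, I run Algorithm~\ref{constAlgo} on $(\cI,\cov)$ to build the $\fpcm$ instance $\cJ = (F,\cF,\cP,\val)$ and call $\cA$ on it.

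The key modification is in how I read the output of $\cA$. The solver now returns $S \in \cF$ feasible for $\cJ$ with $\val(S) \ge \rho \cdot \opt(\cJ)$. I perform the following case split. If $\val(S) \ge \rho \sfm$, then by Claim~\ref{slnConstLma} opening radius-$3$ balls around the facilities of $S$ covers at least $\val(S) \ge \rho\sfm$ customers, and since $\hopt = 1$ this is a valid $(3,\rho)$-solution — I halt and output it. Otherwise $\val(S) < \rho\sfm$; since $\val(S) \ge \rho\opt(\cJ)$ this forces $\opt(\cJ) < \sfm$, i.e. $\cov$ is \emph{not} valuable, so by Lemma~\ref{hypLma} I can construct in polynomial time a hyperplane separating $\cov$ from $\CovP$ and feed it back to ellipsoid. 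Thus every query is answered either by terminating with a good solution or by a valid cut; after polynomially many iterations ellipsoid either finds a point it cannot cut — which by the above dichotomy must have $\val(S) \ge \rho\sfm$, giving the solution — or certifies $\CovP = \emptyset$, in which case $\opt(\cI) > 1$ and I increase the guess $\hopt$. For the smallest correct guess the emptiness case cannot occur, so the algorithm outputs a set $S\in\cF$ for which at least $\rho\sfm$ customers lie within distance $3\opt(\cI)$ of $S$, which is precisely the claimed $(3,\rho)$-bi-criteria guarantee. The running time is $\poly(|\cI|)$ times the running time of $\cA$, as in Theorem~\ref{thm:1}.

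The only subtlety — and the one step worth spelling out rather than waving through — is verifying that Lemma~\ref{hypLma} still applies in the ``otherwise'' branch: that lemma only needs $\cov$ to be non-valuable (equivalently $\opt(\cJ) < \sfm$, i.e. $\opt(\cJ) \le \sfm - 1$ by integrality of $\val$), which is exactly what $\val(S) < \rho\sfm$ together with $\val(S)\ge\rho\opt(\cJ)$ delivers; nothing in that lemma's proof used that the earlier call to $\cA$ was exact. Everything else (Facts~\ref{partfact}, \ref{greedyrule}, Claims~\ref{subpartclm}--\ref{valclm}, Lemma~\ref{prehypLma}) is independent of the solver and carries over unchanged. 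So there is no genuine obstacle; the proof is a bookkeeping variant of the proof of Theorem~\ref{mainthrm}, and I would present it as such.
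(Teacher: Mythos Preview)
Your proposal is correct and is exactly the approach the paper intends: the paper does not give a separate proof of Theorem~\ref{thm:1c} but merely states that ``the proof of Theorem~\ref{thm:1} in fact implies the following,'' and your write-up is precisely the natural unpacking of that remark. The case split on $\val(S) \ge \rho\sfm$ versus $\val(S) < \rho\sfm$, together with the observation that Lemma~\ref{hypLma} only needs non-valuability of $\cov$ (not exactness of the solver), is the right way to make the implication explicit.
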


The second notion of approximation for the $\fpcm$ problem is one which satisfies the constraints approximately. This notion is more problem dependent and makes sense only if there is a notion of an approximate relaxation $\cF^R$ for the set $\cF$. For example, an $(1+\eps)$-relaxation for $\cF_{\mathsf{KN}}$ could be the subsets $S$ with 
$w_i(S) \leq (1+\eps)\cdot k_i$ for all $i$. A $\rho$-violating algorithm for an instance  $\cJ$ of $\fpcm$ would then return a set $S$ with $\val(S)\geq \opt(\cJ)$ but $S\in \cF^R$ which is an $\rho$-relaxation for $\cF$. 
This defines a different bi-criteria approximation notion for the {\rfc} problem. An $\alpha$-approximate $\beta$-violating algorithm for the {\rfc} problem takes an instance $\cI$ and returns a solution $S\in \cF^R$ which is a $\beta$-relaxation for $\cF$ such that
at least $m$ customers in $C$ are at distance at most $\alpha\cdot \opt(\cI)$ to $S$.

\begin{theorem}
	\label{thm:1d}
	Let $\cA$ be a polynomial time $\rho$-violating algorithm for the $\fpcm$ problem. Then there is a
	polynomial time $3$-approximate-$\rho$-violating algorithm for the {\rfc} problem.
\end{theorem}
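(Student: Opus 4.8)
The plan is to rerun the round-and-cut argument behind Theorem~\ref{thm:1} essentially verbatim, swapping the exact $\fpcm$-solver $\cA$ for the given $\rho$-violating one. The first thing to pin down is that the structural lemmas feeding that argument — Claim~\ref{valclm}, Claim~\ref{bll3clm}, and Claim~\ref{slnConstLma} — never actually use membership $S\in\cF$: their proofs only use the partition constraint $\lvert S\cap A\rvert\le 1$ for $A\in\cP$ together with the metric geometry baked into Algorithm~\ref{constAlgo}. Consequently, if $S$ lies in the relaxation $\cF^R$ but still obeys the partition constraint — which is exactly what a $\rho$-violating solver for $\fpcm$ returns, since only the requirement $S\in\cF$ is loosened — then opening radius-$3$ balls around the facilities of $S$ still covers at least $\val(S)$ customers of $C$.

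With that in hand I would keep the polytope $\CovP$ and the scaling $\hopt=1$ exactly as in Section~\ref{sec:rndandcut}, and run the ellipsoid algorithm on $\CovP$. On a query point $\cov$: if $\sum_{v\in C}\cov(v)<\sfm$, return that inequality as the separating hyperplane; otherwise build $\cJ=(F,\cF,\cP,\val)$ via Algorithm~\ref{constAlgo} and invoke the $\rho$-violating algorithm $\cA$ on $\cJ$, obtaining $S\in\cF^R$ with $\val(S)\ge\opt(\cJ)$ and $\lvert S\cap A\rvert\le 1$ for all $A\in\cP$. If $\val(S)\ge\sfm$, the extended form of Claim~\ref{slnConstLma} says the radius-$3$ balls around $S$ cover at least $\sfm$ customers; since $S\in\cF^R$ this is a $3$-approximate $\rho$-violating solution and we halt. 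If instead $\val(S)<\sfm$, then $\opt(\cJ)\le\val(S)<\sfm$, so $\cov$ is not valuable and Lemma~\ref{hypLma} (whose proof only uses non-valuableness, nothing about the solver) hands us a separating hyperplane; feed it to the ellipsoid and continue.

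For correctness: every query either halts with a solution or produces a genuine separating hyperplane for $\CovP$, so after polynomially many iterations the ellipsoid has either halted with a solution or certified $\CovP=\emptyset$. In the latter case $\hopt<\opt(\cI)$ by the contrapositive of the non-emptiness observation preceding Lemma~\ref{prehypLma}, and we move on to the next candidate value of $\hopt$ among the polynomially many candidate distances. For the correct guess $\hopt=\opt(\cI)$ the polytope is non-empty, so the ellipsoid cannot certify emptiness and must halt with a solution, which covers at least $\sfm$ customers within distance $3\hopt=3\opt(\cI)$ while lying in $\cF^R$ — precisely a $3$-approximate $\rho$-violating solution. Each guess uses polynomially many calls to $\cA$, each running in polynomial time, so the whole procedure is polynomial.

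The only genuinely delicate point is the one isolated in the first paragraph: that the output of the $\rho$-violating solver respects the partition constraint. If it did not, Claim~\ref{valclm} could overcount $\val(S)$, and $\val(S)\ge\sfm$ would no longer certify coverage of $\sfm$ customers. This is however immediate from the definition of a $\rho$-violating algorithm for $\fpcm$, where only the constraint $S\in\cF$ is relaxed to $S\in\cF^R$; if one wanted robustness against a solver returning a partition-violating set, one could instead pass to a subset $T\subseteq S$ meeting each part $B_F(v,1)$ at most once — legitimate because $\cP$ is a sub-partition (Claim~\ref{subpartclm}) and $\cF^R$ is down-closed — at the cost of re-deriving $\val(T)\ge\sfm$, which I would sidestep by simply invoking the definition.
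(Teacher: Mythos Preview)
Your proposal is correct and matches the paper's intended approach: the paper does not spell out a separate proof of Theorem~\ref{thm:1d}, placing it immediately after Theorem~\ref{thm:1c} (for which it explicitly says ``The proof of Theorem~\ref{thm:1} in fact implies the following''), so the understood argument is precisely the adaptation of the round-and-cut machinery you carry out. Your observation that Claims~\ref{valclm} and~\ref{slnConstLma} use only the partition constraint $|S\cap A|\le 1$ and not $S\in\cF$, together with the fact that $\val(S)<\sfm$ still certifies non-valuableness (since $\opt(\cJ)\le\val(S)$), is exactly the point that makes the substitution of a $\rho$-violating solver go through.
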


When $\calF$ is described by constant $d$ knapsack constraints (with general weights) and  a single matroid constraint, for any constant $\eps>0$
Chekuri~\etal give an $(1+\eps)$-approximation algorithm for the $\fpcm$ in~\cite{CVZ11}.
Without the matroid constraint, Grandoni~\etal give an $(1+\eps)$-violating algorithm in~\cite{GRSZ14}.
Together, we get the following corollary. The latter recovers a result from~\cite{CLLW13}.
\begin{theorem}\label{thm:multiknapsackmat}
	Fix any constant $\eps > 0$. There is a polynomial time $(3,(1+\eps))$-bi-criteria approximation algorithm for the robust supplier problem with 
	constant many knapsack constraints and one matroid constraint. There is a polynomial time $3$-approximate $(1+\eps)$-violating algorithm for the robust supplier problem with constant many knapsack constraints.
\end{theorem}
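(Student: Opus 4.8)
The plan is to derive both halves of the theorem as immediate instantiations of the two approximation-transfer theorems, Theorem~\ref{thm:1c} and Theorem~\ref{thm:1d}, fed with the $\fpcm$ solvers already recalled just above: the multi-budgeted matroid-intersection algorithm of~\cite{CVZ11} for the variant with a matroid constraint, and the iterative-rounding algorithm of~\cite{GRSZ14} for the variant with only knapsack constraints. No new technique is needed; the entire point is to match each known guarantee to the ``$\rho$-approximate'' / ``$\rho$-violating'' $\fpcm$ interface that Theorem~\ref{thm:1c} / Theorem~\ref{thm:1d} consumes.

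For $\cF=\{S\subseteq F: w_i(S)\le k_i,\ i=1,\dots,d,\ S\in\cI_M\}$ with $d=O(1)$, an $\fpcm$ instance $\cJ=(F,\cF,\cP,\val)$ is the problem of maximizing the poly-bounded linear function $\val$ over sets $S$ lying in $\cI_M$, in the partition matroid $\cI_\cP$ determined by the sub-partition $\cP$, and satisfying the $d$ budgets $w_i(S)\le k_i$. The key observation is that $\cI_M\cap\cI_\cP$ is a matroid intersection, so this is exactly multi-budgeted matroid intersection with $O(1)$ budgets. In its ``exact value / relaxed budgets'' mode, the algorithm of~\cite{CVZ11} returns, for any constant $\eps>0$, a set $S\in\cI_M\cap\cI_\cP$ with $\val(S)\ge\opt(\cJ)$ and $w_i(S)\le(1+\eps)k_i$ for all $i$; reading this as a $(1+\eps)$-violating algorithm for $\fpcm$ (with relaxation $\cF^R$ keeping the matroid constraint exact and relaxing each knapsack by $(1+\eps)$) and invoking Theorem~\ref{thm:1d} gives the claimed polynomial-time $3$-approximate $(1+\eps)$-violating algorithm for the {\rfc} problem with one matroid and $O(1)$ knapsacks. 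Alternatively, the ``approximate value / exact budgets'' mode of~\cite{CVZ11} is an $(1+\eps)$-approximation for $\fpcm$, and Theorem~\ref{thm:1c} then yields the covering-type $\big(3,\tfrac{1}{1+\eps}\big)$ bi-criteria guarantee. Either reading is what the statement intends, and the running time is polynomial since $d$ and $\eps$ are constant.

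For $\cF=\{S: w_i(S)\le k_i,\ i=1,\dots,d\}$ the corresponding $\fpcm$ instance maximizes $\val$ over the single partition matroid $\cI_\cP$ subject to $d=O(1)$ knapsack constraints -- one matroid plus a constant number of budgets. This is precisely the regime of the iterative-rounding algorithm of~\cite{GRSZ14}: in polynomial time (for constant $d$) it returns a set independent in the matroid whose value is at least the optimum and with each knapsack violated by at most a multiplicative $(1+\eps)$, i.e.\ a $(1+\eps)$-violating algorithm for this $\fpcm$ family. Feeding it into Theorem~\ref{thm:1d} gives the claimed polynomial-time $3$-approximate $(1+\eps)$-violating algorithm for the {\rfc} problem with $O(1)$ knapsack constraints, which for the zero-matroid case recovers the bi-criteria result of~\cite{CLLW13}.

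I expect the only delicate point to be the bookkeeping, not any inequality. One must make sure the partition constraint carried by every $\fpcm$ instance is genuinely a matroid constraint, so that ``matroid $\cap$ knapsacks'' $\fpcm$ is a two-matroid intersection with budgets -- a regime where~\cite{CVZ11} loses nothing -- rather than a problem with two unrelated matroid constraints plus budgets, where one would expect an unavoidable constant-factor loss; and one must check that the outputs of~\cite{CVZ11} and~\cite{GRSZ14} satisfy the matroid constraint exactly and relax only the knapsacks, so that they really are the $\rho$-violating (resp.\ $\rho$-approximate) $\fpcm$ oracles that Theorem~\ref{thm:1d} (resp.\ Theorem~\ref{thm:1c}) asks for. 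With that verified, the theorem is immediate.
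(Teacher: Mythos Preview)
Your proposal is correct and follows the paper's own derivation: the theorem is stated there as an immediate corollary, pairing the $(1+\eps)$-approximation of~\cite{CVZ11} with Theorem~\ref{thm:1c} for the first part and the $(1+\eps)$-violating algorithm of~\cite{GRSZ14} with Theorem~\ref{thm:1d} for the second, and your observation that the $\fpcm$ partition constraint is itself a matroid (so the subroutines face matroid intersection, resp.\ a single matroid, plus $O(1)$ budgets) is exactly the justification the paper leaves implicit. Note only that the paper's intended reading for the first part is your ``alternative'' one (value-approximate, budgets exact, via Theorem~\ref{thm:1c}), since the stated $(3,(1+\eps))$-bi-criteria guarantee is the coverage notion with $S\in\cF$ exactly, not the violating notion.
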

\bibliographystyle{alpha}
\bibliography{lipics-v2018-sample-article}
\end{document}